\def\titl{Fast interpolation and multiplication of unbalanced polynomials}
\def\pdftitle{\titl}
\def\auts{Pascal Giorgi, Bruno Grenet, Armelle Perret du Cray, Daniel S. Roche}
\algnewcommand\Input{\item[\textbf{Input:}]}
\algnewcommand\Output{\item[\textbf{Output:}]}
\algnewcommand\algorithmicforeach{\textbf{for each}}
\newtheorem{thm}{Theorem}[section]   
\newtheorem{lem}[thm]{Lemma}
\newtheorem{cor}[thm]{Corollary}
\newtheorem{fact}[thm]{Fact}
\newtheorem{definition}[thm]{Definition}
\theoremstyle{remark}
\newcommand{\gO}{O}
\newcommand{\gOt}{\tilde{O}}
\newcommand{\ff}[1]{\mathbb{F}_{\!#1}}
\newcommand{\fq}{\ff{q}}
\newcommand{\zz}{\mathbb{Z}}
\newcommand{\nn}{\mathbb{N}}
\newcommand{\bigo}[1]{\ensuremath{\gO\!\left(#1\right)}}
\newcommand{\bigot}[1]{\ensuremath{\gOt\!\left(#1\right)}}
\newcommand{\polylog}{\ensuremath{\mathsf{polylog}}}
\newcommand{\lb}{\log_2}
\newcommand{\ceil}[1]{\ensuremath{\lceil {#1} \rceil}}
\newcommand{\height}[1]{\ensuremath{\mathsf{height}\!\left(#1\right)}}
\newcommand{\abs}[1]{\ensuremath{\left|#1\right|}}
\newcommand{\round}[1]{\ensuremath{\left\lfloor#1\right\rceil}}
\DeclareMathOperator{\brem}{rem}
\DeclareMathOperator{\llog}{loglog}
\DeclareMathOperator{\IDFT}{IDFT}
\newcommand\algo[1]{\textsc{#1}\xspace}
\newcommand\Uinterpolate{\algo{Uinterpolate}}
\newcommand{\UintSlice}{\algo{UinterpolateSlice}}
\newcommand{\Uint}{\algo{Uinterpolate}}
\newcommand{\SupSup}{\algo{SupportSuperset}}
\newcommand{\esuper}{\ensuremath{\mathcal{T}}}
\newcommand{\VerifProd}{\algo{VerifProd}}
\newcommand{\UnbalancedProd}{\algo{UnbalancedProd}}
\newcommand{\bcinterp}{\algo{Interpolate\_mbb}}
\newcommand{\extoMDBB}{\algo{ExplicitMDBB}}
\newcommand{\MBBtoMDBB}{\algo{MBBtoMDBB}}
\newcommand{\sumMDBB}{\algo{SumMDBB}}
\newcommand{\prodMDBB}{\algo{ProdMDBB}}
\newcommand{\MBBimg}{\algo{MDBBImage}}
\newcommand{\bitlen}[1]{\ensuremath{\mathsf{bitlen}\!\left(#1\right)}}
\newcommand{\bitlenp}[1]{\ensuremath{\mathsf{bitlen}_+\!\left(#1\right)}}
\newcommand{\bitlenx}[1]{\ensuremath{\mathsf{bitlen}_x\!\left(#1\right)}}
\newcommand{\fs}{\ensuremath{f^*}}
\newcommand{\fss}{\ensuremath{f^{**}}}
\title{\titl}
\newcommand\auth[4]{%
  \begin{minipage}{.45\textwidth}%
  \centering
      #1\\%
      \normalsize
      #2\\%
      #3\\%
      #4
  \end{minipage}%
}
\author{%
  \auth{Pascal Giorgi}{LIRMM, Univ. Montpellier, CNRS}{Montpellier, France}{pascal.giorgi@lirmm.fr}
  \auth{Bruno Grenet}{LJK, Univ. Grenoble Alpes, CNRS}{Grenoble, France}{bruno.grenet@univ-grenoble-alpes.fr}\\[3em]
  \auth{Armelle Perret du Cray}{University of Waterloo}{Waterloo, ON, Canada}{aperretducray@uwaterloo.ca}
  \auth{Daniel S. Roche}{United States Naval Academy}{Annapolis, Maryland, U.S.A}{roche@usna.edu}
}
\begin{document}
\maketitle

\begin{abstract}

{We consider the classical problems of interpolating a polynomial given a black box for evaluation, and of
  multiplying two polynomials, in the setting where the bit-lengths of the coefficients may vary widely, so-called
  \emph{unbalanced} polynomials. Let $f\in\zz[x]$ be an unknown polynomial and $s,D$ be bounds on its total bit-length
  and degree, our new interpolation algorithm returns $f$ with high probability using $\bigot{s\log D}$ bit operations
  and $\bigo{s\log D \log s}$ black box evaluation. For polynomial multiplication, assuming the bit-length $s$ of the
  product is not given, our algorithm has an expected running time of $\bigot{s\log D}$, whereas previous methods for
  (resp.) dense or sparse arithmetic have at least $\bigot{sD}$ or $\bigot{s^2}$ bit complexity. }
\end{abstract}

\section{Introduction}

Consider a univariate polynomial with integer coefficients, written
\begin{equation}\label{eqn:ff}
f = c_1x^{e_1} + c_2x^{e_2} + \cdots + c_{t}x^{e_{t}},
\end{equation}
and write $H=\height{f}=\max_i\abs{c_i}$ and
$D=\deg f=\max_i e_i$.

Traditionally, fast algorithms have either used the \emph{dense representation} as a list of coefficients with total
size $\bigo{D\log H}$, or in the \emph{sparse representation} (a.k.a. \emph{supersparse} or \emph{lacunary}) as a list
of $t$ nonzero coefficient-exponent pairs with total size $\bigo{t\log (DH)}$. Considerable effort has been made since
the 1970s to develop quasi-linear algorithms for many problems with dense and sparse polynomials, but notice that in
either case, there is an implicit assumption that all coefficients have the same bit-size $\lb H$.

Instead, we focus on the \emph{total bit-length} of $f$, which
we write as $s = \sum_i(\lb\abs{c_i} + \lb e_i)$.
Note that both sizes above --- and therefore also any quasi-linear time
algorithms in those models --- can be as large as quadratic in $s$.
Our goal of quasi-linear complexity with respect to $s$
can be viewed as a natural progression: dense algorithms
use $\lb H$ bits to represent every possible coefficient;
sparse algorithms are more refined by avoiding storage of zero
coefficients; and \emph{unbalanced} algorithms (as we are proposing)
refine further by using the exact number of bits for every term.

\subsection{Our contributions}

  We provide new algorithms for interpolation and multiplication of
  polynomials with unbalanced coefficients:

  \begin{itemize}[leftmargin=*]
  \item (\cref{thm:Uint}) There is a Monte Carlo randomized algorithm that takes a \emph{modular black
      box} (see \cref{def:mbb}) for evaluating an unknown polynomial $f\in\zz[x]$, and bounds $D,s$ on its degree and
    total bit-size, and produces the sparse, unbalanced representation of $f$ with high probability. It uses
    $\bigo{s\log D\log s}$ black box evaluations and $\bigot{s\log D}$ additional bit operations.
  \item (\cref{thm:unbalancedproduct}) There is a randomized algorithm that takes two polynomials $f,g\in\zz[x]$ and
    with high probability produces their product $fg$ with expected running time $\bigot{s\log D}$, where $s,D$ are
    upper bounds on the total bit-sizes and degree of the inputs and outputs. Note that $s$ needs not be known in
    advance.
  \end{itemize}
  
  Unbalanced interpolation works by recovering terms in \emph{slices} according to coefficient size, starting with the
  largest ones (\cref{sec:interp}). It relies on a new technique (\cref{sec:slices}) that first recovers a superset of
  the exponents of terms of interest, in order to avoid carry propagation from smaller terms which have not yet been
  identified. We leverage the interpolation to handle unbalanced multiplication, exploiting a probabilistic product
  verification to discover adaptively the total bit-length of the product (\cref{sec:mul}).

  Our ultimate goal is to achieve quasi-optimal $\bigot{s}$ running time in terms of total bit-size of all coefficients
  and exponents; \cref{sec:supersparse} gives some insights on the barriers to achieving this so far. Along the way, we
  also carefully define a new black box model which acts as a precise specification for the interface between our
  various subroutines (\cref{sec:prelim}), and give some separation bounds on carry propagation between terms of
  different coefficient sizes (\cref{sec:carries}).

\subsection{Related work}\label{sec:related}

\paragraph{Dense polynomial multiplication}
The product of two polynomials $f,g\in \mathsf{R}[z]$
over a generic ring
can be computed using the schoolbook
algorithm, Karatsuba's algorithm~\cite{karatsuba1962}, Toom-Cook algorithm~\cite{Toom1963,Cook1966} or FFT-based
algorithms \emph{à la}
Sch\"onhage-Strassen~\cite{SchonhageStrassen1971,Nussbaumer1980,cantor1991}.
The best ring complexity is $O(D\log D\llog D)$ via the Cantor-Kaltofen algorithm~\cite{cantor1991}.
For finite fields one can get a better bit complexity using more specialized algorithms
~\cite{HHL17,HarveyHoeven2022}.

When $R =\zz$, one must take into account coefficient growth during the computation. In particular, FFT-based
algorithms require a height bound on the output. 
If $H$ is the height (i.e. largest coefficient magnitude) of the input polynomials, then
the height of the product is at most $H^2 D$, which is
$\bigo{\log(DH)}$ in terms of bit-length.

Then the fastest method for integer polynomial multiplication is
Kronecker substitution~\cite{Kronecker1882,Schonhage1982,Fateman2010,ParallelPolyIntMul2016},
which translates the problem into the multiplication of
two integers of bit-length $\gO(D\log(DH))$, by evaluating the
polynomials at sufficiently high power of 2. The complexity is therefore
$\bigo{D \log(DH) \log(D \log(DH))}$ using the fast integer multiplication~\cite{HarveyHoeven2021}.

None of these can efficiently handle unbalanced coefficients.
To the best of our knowledge, only the Toom-Cook
algorithm has been adapted to the case of unbalanced coefficients, by \citet{BodratoZanoni2020}. They reduce the problem
to bivariate Toom-Cook multiplication, but do not provide a formal
complexity analysis.

\paragraph{Sparse polynomial multiplication}
When $f$, $g$ are $t$-sparse polynomials as in \eqref{eqn:ff},
the schoolbook
algorithm for computing $fg$ requires $\gO(t^2)$ ring operations plus $\gO(t^2\log D)$ bit operations on the
exponents. In the worst case this is the best complexity possible,
though many practical improvements have been proposed
\cite{Johnson74,MonaganPearce2009,MonaganPearce2011}.
Output-sensitive algorithms have been designed to provide a better complexity when
fewer terms are expected \cite{Roche2011,vdHLec2012,vdHLebSch2013,ColeHariharan,ArRo15}.
More recent algorithms, based on sparse interpolation, managed
to reach quasi-linear dependency in the output sparsity
\cite{nakos2020,giorgi2020:sparsemul,GGPR22a}.

\paragraph{Sparse interpolation}
Modern algorithms to interpolate a sparse integer polynomial given a black box function for its evaluation started with
\citet{Zippel79,BenOrTiwari}, the latter of which is based on Prony's method from exponential analysis \cite{Prony}.
Numerous extensions have been proposed~\cite{Zippel1990,KY88,HuangGao2019} to handle finite field coefficients
\cite{GrigorievKarpinskiSinger1990,HuangRao1999,JavadiMonagan2010,GiRo2011,Huang2021}, discover the sparsity adaptively
via early termination \cite{KaltofenLee2003}, or extend to (sparse) rational function recovery
\cite{KaltofenTrager:STOC88,KYW1990,KaltofenYang2007,KaltofenNehring2011,CuytLee2011,HoevenLecerf2021}. Some algorithms
require slight extensions of the normal black box model
\cite{GrigorievKarpinskiSinger1990,Mansour1995,AlonMansour1995,MuraoFujise1996,KaltofenNehring2011,GiRo2011,BlaserJindal2014,vdHLec2014}.

Algorithms whose complexity is polynomial in $\log D$ rather than $D$
are termed \emph{supersparse}, and date from
\cite{Kal10a,Mansour1995,AlonMansour1995}.
\citet{GaSch09} gave a supersparse algorithm for \emph{white-box}
interpolation of a straight-line program; that technique has now been
extended to (some kinds of) black boxes over finite fields, or modular
rings \cite{ArGiRo14,ArGiRo16,HuangGao2020,Huang2023,Huang2019}.
More details on these algorithms and techniques can be found in
\cite{Arnold2016,vdHLec2019,armelle-phd}.
The first quasi-linear algorithm in terms of $\bigot{t\log(DH)}$
was recently given by the authors \cite{GGPR22a}.

Despite this considerable progress, no prior work provides a complexity estimate
in terms of the total (possibly unbalanced) bit-length of the output.

  \paragraph{Multivariate polynomials}
  A common approach to handling multivariate polynomials is to reduce to
  the univariate case via variable substitution
  \cite{Kal10a,ArRo14,HuangGao2019,GGPR22a,vdHLec2023Misc},
  where such polynomials naturally become (super)sparse.
  The univariate algorithms we propose will work well when the
  bit-length of exponents all have bit-lengths close to the maximum
  $\log D$, where the normal Kronecker substitution preserves the
  total exponent bit-length. But effectively handling unbalanced
  exponents remains an open problem.

\paragraph{Linear algebra}
Tangentially related to our work is the study of integer
and polynomial matrices with unbalanced entry sizes, where recent work
has focused on complexity in terms of average degree or total bit-length
rather than maximum.
A cornerstone result is for order basis computation
\cite{ZhoLab12}, which led to algorithms with similar complexity for
Hermite normal form, determinant, and rank, and recently Smith form of
an integer matrix
\cite{LabNeiZhou2017,LabNeiVuZhou2022,BirLabSto2023}.

\section{Preliminary}\label{sec:prelim}

\subsection{Bit lengths}

We use the usual notion of the bit-length of an integer, that is $ \bitlenp{a \in \nn}= \ceil{\log_2 (a + 1)}$
and $\bitlen{a \in \zz}= 1 + \bitlenp{\abs{a}}$.

The bit-length of a sparse polynomial $f$ as in \eqref{eqn:ff}
is defined as
\[\textstyle \bitlenx{f} = \sum_{i=1}^t \left(\bitlen{c_i} + \bitlenp{e_i} \right),\]
and as usual we define $\height{f} = \max_i \abs{c_i}$.

Useful for us will be this simple lower bound on the sparsity of
$f$ in terms of its bit-length.

\begin{lem}\label{lem:bitlen-lb}
  Let $f\in\zz[x]$ be a nonzero polynomial with bit-length $s = \bitlenx{f}$.
  The number of nonzero terms in $f$ is bounded by
  \(\#f < 2s/\log_2 s\).
\end{lem}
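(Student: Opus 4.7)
The plan is to lower-bound $s$ in terms of $t = \#f$ and then invert the inequality. Two observations from the bit-length formula drive the bound: each nonzero coefficient contributes at least $2$ bits (one for the sign and at least one for the magnitude), so $\sum_i \bitlen{c_i} \geq 2t$; and since the exponents $e_1, \ldots, e_t$ are distinct nonnegative integers, the sum $\sum_i \bitlenp{e_i}$ is minimized when $\{e_1, \ldots, e_t\} = \{0, 1, \ldots, t-1\}$, yielding $\sum_i \bitlenp{e_i} \geq \sum_{k=1}^{t} \ceil{\log_2 k} \geq \log_2(t!)$. Combined, $s \geq 2t + \log_2(t!)$, and Stirling gives $s \geq t \log_2 t + (2 - \log_2 e)\,t$, where $2 - \log_2 e > 0$.

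To turn this into $t < 2s/\log_2 s$, I would argue by contradiction: suppose $t \log_2 s \geq 2s$. Plugging in the lower bound $2s \geq 4t + 2\log_2(t!)$ yields, after dividing by $t$, a lower bound of the form $\log_2 s \geq 2\log_2 t + C$ for some absolute constant $C > 0$, i.e.\ $s \geq c\,t^2$ for some $c > 1$. Re-substituting $s \geq c t^2$ into the assumption $t \log_2 s \geq 2s$ then forces $\log_2 s \geq 2ct$, which is incompatible with $s \geq c t^2$ as soon as $t \geq 2$. The corner case $t = 1$ reduces to checking $\log_2 s < 2s$, which holds for all $s \geq 1$ (and a single nonzero term already forces $s \geq 2$).

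The main obstacle is ensuring that the constant factor of $2$ in $2s/\log_2 s$ is genuinely achieved rather than merely $O(s/\log s)$: naively inverting $s \geq t\log_2 t$ leaves a potential additive gap of $\log_2\log_2 t$ between $\log_2 s$ and $\log_2 t$, which by itself would only yield a bound like $t \leq (1+o(1))\, s/\log_2 s$. The fix is to carry along the additive $2t$ from the coefficients and the slack between Stirling's lower bound and $t \log_2 t$; this is precisely what provides the $2\log_2 t + C$ bound on $\log_2 s$ used above, so that after the self-substitution the quadratic gap $s \geq c t^2$ beats the linear growth $\log_2 s \geq 2ct$ cleanly for every $t \geq 2$. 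No delicate optimization of constants is required once the contradiction is organized this way.
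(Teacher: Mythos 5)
Your setup matches the paper's: both lower-bound $s$ by $2t+\log_2 t!$ using the minimal polynomial $1+x+\cdots+x^{t-1}$, apply Stirling, and argue by contradiction. The derivation of $\log_2 s \ge 2\log_2 t + C$ with $C = 4-2\log_2 e>1$, hence $s\ge ct^2$ with $c=16/e^2>2$, is correct. But the closing step is a genuine logical gap: the two inequalities $\log_2 s \ge 2ct$ and $s\ge ct^2$ are \emph{both lower bounds on $s$} and are not mutually incompatible for any $t$. For instance $s=4^{ct}$ satisfies both simultaneously when $t\ge 2$. The contradiction cannot come from comparing two lower bounds against each other; it has to come from the fact that the assumption $t\log_2 s\ge 2s$ also caps $s$ from above in terms of $t$, and you never invoke that.

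The repair is one line, and you already have the needed ingredient. From $s\ge ct^2$ with $c>1$ you get $t<\sqrt{s}$; substituting this \emph{upper} bound on $t$ back into the assumption $2s\le t\log_2 s$ gives $2s<\sqrt{s}\,\log_2 s$, i.e.\ $\log_2 s>2\sqrt{s}$, which fails for every $s\ge 1$. (Equivalently, since $x/\log_2 x$ is increasing for $x\ge e$, the assumption $t\ge 2s/\log_2 s$ together with $s\ge ct^2$ forces $\log_2 c+2\log_2 t\ge 2ct$, false for $t\ge 2$; the case $t=1$ you already handle.) For comparison, the paper closes the contradiction differently: it checks $s\le 16$ by hand and for $s\ge 17$ uses $\log_2\log_2 s<\tfrac12\log_2 s$ to show directly that $t\ge 2s/\log_2 s$ would force $t\log_2 t>s$, contradicting $s>t\log_2 t$. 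Your variant, once the last step is fixed as above, avoids the case split on $s$, which is a minor gain.
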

\begin{proof}
  For any $t\ge 0$, the smallest (in terms of bit-length) polynomial
  with $t$ terms is
  $1 + x + \cdots + x^{t-1}$,
  which has total bit-length
  \[\sum_{i=0}^{t-1} (\bitlen{1} + \bitlenp{i})
    = 2t + \sum_{i=1}^t \ceil{\log_2 i}
    > 2t + \log_2 t!
  \]
  Taking this value as $s$ and $t$ as $\#f$,
  the lemma is easily confirmed for all $s \le 16$.
  So assume $s \ge 17$.
  We first apply Stirling's approximation of factorial to
  get
  \[s = 2t + \log_2 t! > 2t + t\log_2\tfrac{t}{e}
    = t\log_2\tfrac{4t}{e} > t\log_2 t.\]

  Note also that because $s\ge 17$, we have
  $\log_2\log_2 s < \tfrac{1}{2}\log_2 s$.

  Finally, by way of contradiction, if $t \ge 2s/\log_2 s$, then
  \[t\log_2 t  > \tfrac{2s}{\log_2 s}(\log_2 s - \log_2 \log_2 s) > s,\]
  which is a contradiction to the inequality shown above.
\end{proof}

\subsection{Reducing bit length of a polynomial}

We will often need to reduce a sparse polynomial $f\in\zz[x]$
modulo $x^p-1$ over $\zz/m\zz$, for some integers $p,m$ with $p\le m$,
which we denote as $f \bmod\langle x^p-1, m\rangle$.
\begin{lem}\label{lem:reduce}
  Given $f\in\zz[x]$ and $p,m\in\zz$ with $p\le m$,
  computing  $f\bmod\langle
  x^p-1,m\rangle$ requires $\gO(\bitlenx{f}\llog m)$ bit
  operations.
\end{lem}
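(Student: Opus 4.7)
The plan is to process each term $c_i x^{e_i}$ of $f$ independently: reduce the exponent modulo $p$, reduce the coefficient modulo $m$, and output the sparse polynomial $\sum_i (c_i\bmod m)\,x^{e_i\bmod p}$, without bothering to combine terms that share the same reduced exponent. Such an output is a valid representative of $f\bmod\langle x^p-1,m\rangle$, and because each reduced bit-length is bounded by the corresponding original, writing it down takes only $\gO(\bitlenx{f})$ bit operations.

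The key tool I would invoke is fast unbalanced integer division: dividing an $n$-bit integer by a $k$-bit integer with $k\le n$ can be done in $\bigo{\lceil n/k\rceil \M(k)}$ bit operations, which combined with the Harvey--van der Hoeven bound $\M(k)=\bigo{k\log k}$ yields $\bigo{n\log k}$ per division. Applied to the exponent reduction, $p\le m$ gives $\bitlenp{p}\le\bitlen{m}$ and hence $\log\bitlenp{p}=\gO(\llog m)$, so reducing $e_i$ modulo $p$ costs $\bigo{\bitlenp{e_i}\llog m}$; the coefficient reduction analogously costs $\bigo{\bitlen{c_i}\llog m}$. When the input is already smaller than the corresponding modulus, no division is needed and the cost is merely linear in the input size. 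Summing across all terms yields
\[
  \bigo{\sum_i \bigl(\bitlenp{e_i}+\bitlen{c_i}\bigr)\llog m}=\bigo{\bitlenx{f}\llog m}.
\]

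The main subtlety I anticipate is invoking precisely the unbalanced division bound: plain fast division of $n$-bit integers would cost $\bigo{\M(n)}=\bigo{n\log n}$ per reduction and would introduce a spurious $\log\bitlenx{f}$ factor instead of the desired $\llog m$, so exploiting the smallness of the divisor is essential. A secondary concern is the non-canonical output: combining duplicate exponents $e_i\bmod p$ by sorting or hashing would add extra logarithmic factors, so I would leave that merge to the caller.
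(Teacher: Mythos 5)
Your proof is correct and takes essentially the same route as the paper: the paper also reduces each coefficient and exponent independently, using the "standard method" of $\gO(\log|c_i|/\log m)$ multiplications of $\lb(m)$-bit integers --- which is precisely the unbalanced division bound $\gO(\lceil n/k\rceil\,\M(k))$ you invoke --- and then sums over all terms. Your extra remarks (leaving duplicate reduced exponents unmerged, and why balanced fast division would be too slow) are sound but the paper simply leaves these points implicit.
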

\begin{proof}
  Consider a single coefficient $c_i$. Reducing $c_i$ modulo $m$ requires
  $\gO\left(\frac{\log |c_i|}{\log m}\right)$ multiplications of
  $\lb(m)$-bit integers using the standard method, which
  has bit complexity $\gO(\log |c_i|\llog m)$. Summing over all
  coefficients, and doing the same for the exponents modulo $p$, gives the
  stated total bit complexity.
\end{proof}

\subsection{Black Boxes: MBBs and MDBBs}\label{ssec:mdbb}

We begin with the standard definition of a modular black box for evaluating a
sparse integer polynomial, with additional parameters $B,L$ for the
later complexity analysis.

\begin{definition}\label{def:mbb}
  A \emph{modular black box} (MBB) $\pi$ for unknown $f\in\zz[x]$
  parameterized by $B,L$ is
  a procedure which, given $a,m\in\nn$ where $a<m$, computes and
  returns $\pi(a,m)$ that equals
  $f(a) \bmod m$, using $\bigo{B + L\log m \llog m}$ bit
  operations.
\end{definition}

Intuitively, this corresponds to the notion that the MBB
works by performing $O(L)$ operations in $\zz/m\zz$ plus $O(B)$ other
operations. Moreover, in the final complexity measure, $B$ tracks the
total number of calls to the MBB, whereas $L$ (ignoring sub-logarithmic
factors) tracks the total size of the outputs produced by the MBB.

Our algorithms for interpolation and multiplication are composed of
multiple subroutines and interconnected procedures, both in this paper
and from prior work. Of particular importance is defining carefully the
interface of these procedures as it concerns the unknown polynomial
black box.

While the MBB model above is an appropriate starting point (and indeed
can serve as the input for our algorithms), in subroutines we frequently
need to update the unknown $f$ with an explicitly-constructed
partial result $\fs$. This update creates a problem for the MBB model, as
we do not know any quasi-linear time algorithm for MBB evaluation of a
known sparse polynomial where the degree and evaluation modulus may both
be large.
For instance, computing $a^b\bmod p$ when $\log b \approx \log p$ costs
$\gOt(\log^2 p)$ bit operations.

Instead, we will rely on a more specific black box for evaluation.

\begin{definition}\label{def:mdbb}
  A \emph{multi-point modular derivative black box} (MDBB) $\pi$
  for an unknown $f\in\zz[x]$ is
  a procedure which, given $p,\omega,m,k\in\nn$
  where $\omega$ is a $p$th primitive root of unity (PRU) in $\zz/m\zz$,
  produces two length-$k$ sequences containing the evaluations of
  both $f$ and $xf'$ at $1, \omega, \omega^2, \ldots, \omega^{k-1}$,
  where $f'$ is the formal derivative of $f$.
\end{definition}

We now show four efficient MDBB constructions: from a MBB, an given polynomial, or the sum or product of two MDBB.

Producing MDBB evaluations from a MBB black box follows the same
technique already present in prior work, which exploits the fact that
$(1+m)^e \equiv 1 + em \bmod m^2$, see e.g. \cite{ArRo15,GGPR22a}.

\begin{algorithm}\caption{$\MBBtoMDBB(\pi,p,\omega,m,k)$}\label{algo:MBBtoMDBB}
\begin{algorithmic}[1]
  \Input{MBB $\pi$ for unknown $f\in\zz[x]$ and
    MDBB inputs $p,\omega,m,k$}
  \Output{Evaluations of $f$ and $xf'$ at $1,\omega,\ldots,\omega^{k-1}$ modulo $m$}
  \For{$i = 0, 1, \ldots, k-1$}
    \State $\alpha_i \gets \pi(\omega^i, m^2)$
      \Comment{$\alpha_i = f(\omega^i)\bmod m^2$}
    \State $\beta_i \gets \pi((1+m)\omega^i, m^2)$
      \Comment{$\beta_i = f((1+m)\omega^i)\bmod m^2$}
    \State $\gamma_i \gets (\beta_i-\alpha_i)/m$ using exact integer division
  \EndFor
  \State\Return $(\alpha_i\bmod m)_{0 \le i < k}$ and $(\gamma_i)_{0\le i < k}$
\end{algorithmic}\end{algorithm}

\begin{lem}\label{lem:MBBtoMDBB}
  If $\pi$ is an MBB for $f\in\zz[x]$
  with cost parameters $B,L$ as in \cref{def:mbb},
  then \cref{algo:MBBtoMDBB} \MBBtoMDBB{} correctly produces a single
  set of MDBB evaluations of $f$ and $xf'$ and has bit complexity
  \(\bigo{Bk + (L+1)k\log m \llog m}.\)
\end{lem}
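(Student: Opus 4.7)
The proof splits into a correctness part and a complexity part, both of which are short once the right algebraic identity is invoked.

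For correctness, note that since $\pi$ is an MBB for $f$ we have $\alpha_i = f(\omega^i)\bmod m^2$ and $\beta_i = f((1+m)\omega^i)\bmod m^2$. The key identity is that $(1+m)^e \equiv 1 + em \pmod{m^2}$ for every $e\in\nn$, which follows from the binomial theorem as all higher terms contain a factor $m^2$. Writing $f = \sum_j c_j x^{e_j}$, this gives
\[
\beta_i \equiv \sum_j c_j (1+m)^{e_j} \omega^{i e_j}
       \equiv \sum_j c_j (1 + e_j m)\omega^{i e_j}
       \equiv \alpha_i + m\sum_j c_j e_j \omega^{i e_j} \pmod{m^2}.
\]
The rightmost sum equals $(xf'(x))\big|_{x=\omega^i}$, since $xf'(x) = \sum_j c_j e_j x^{e_j}$. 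Thus $\beta_i - \alpha_i$ is divisible by $m$ in $\zz$ (not merely modulo $m^2$, once we work with the canonical representatives in $[0,m^2)$, after possibly correcting by a multiple of $m^2$, which is still divisible by $m$), and $\gamma_i = (\beta_i-\alpha_i)/m$ equals $(xf')(\omega^i)\bmod m$. Reducing $\alpha_i$ modulo $m$ yields $f(\omega^i)\bmod m$. So the outputs indeed match the MDBB specification.

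For the complexity, we account separately for the $2k$ MBB calls and the auxiliary arithmetic in $\zz/m^2\zz$. By \cref{def:mbb}, each call to $\pi$ with modulus $m^2$ costs $\bigo{B + L\log(m^2)\llog(m^2)} = \bigo{B + L\log m\llog m}$, and the loop performs $2k$ such calls, contributing $\bigo{Bk + Lk\log m \llog m}$ bit operations in total. The auxiliary work per iteration consists of computing the powers $\omega^i$ and $(1+m)\omega^i$ modulo $m^2$ (one multiplication of integers of bit-length $\bigo{\log m}$, costing $\bigo{\log m\llog m}$ via fast integer multiplication), the exact division $(\beta_i-\alpha_i)/m$, and the reduction of $\alpha_i$ modulo $m$; each of these is $\bigo{\log m\llog m}$ bit operations, giving $\bigo{k\log m\llog m}$ overall. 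Summing yields the claimed bound $\bigo{Bk + (L+1)k\log m\llog m}$, where the $+1$ absorbs the auxiliary $\bigo{k\log m\llog m}$ term.

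The only mildly delicate point is the justification that the integer $\beta_i - \alpha_i$, as produced using the canonical representatives returned by $\pi$, is actually divisible by $m$; this follows because congruence modulo $m^2$ implies congruence modulo $m$, so $\beta_i \equiv \alpha_i \pmod m$ and the difference lies in $m\zz$ after reducing modulo $m^2$ to the range $(-m^2,m^2)$. No other step presents any real obstacle.
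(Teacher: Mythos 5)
Your proposal is correct and follows essentially the same route as the paper: both rest on the identity $(1+m)^e \equiv 1+em \pmod{m^2}$, which gives $f((1+m)x)-f(x)\equiv mxf'(x) \pmod{m^2}$, and both charge the cost to $O(k)$ MBB calls plus $O(k)$ ring operations modulo $m^2$. Your treatment is simply more detailed, in particular in justifying the exact integer divisibility of $\beta_i-\alpha_i$ by $m$, which the paper leaves implicit.
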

\begin{proof}
  Any single term
  $cx^e$ in $f$ is mapped to
  $(c+cem)x^e$ in $f((1+m)x)$ modulo $m^2$. Therefore
  $f((1+m)x) - f(x) \equiv xmf'(x)$, and the $\gamma_i$'s are indeed
  evaluations of $xf'(x)$ as required.
  The bit complexity comes from the cost of performing
  $O(k)$ MBB evaluations and ring operations in $\zz/m\zz$.
\end{proof}

To perform MDBB evaluations of a given polynomial $f$,
we compute the derivative $f'$ explicitly, reduce
both $f,f'$ modulo $\langle x^p-1,m \rangle$,
and then perform two multi-point evaluations.
If $p < \#f$, then the reduced polynomial is dense and we
use a DFT for the multi-point evaluations.
Otherwise, we treat it as a sparse multi-point evaluation
on points in geometric progression, which is a matrix-vector
product with a transposed Vandermonde matrix
(see e.g., \cite[Fact 3.2]{GGPR22a}).
The bit complexity is $\gOt(\min(p \log m + s \llog m, (s + k) \log p \log m)$,
or more precisely:

\begin{lem}\label{lem:extoMDBB}
  There exists a procedure \extoMDBB{} that,
  given a polynomial $f$ with $\bitlenx{f} = s$,
  and MDBB inputs $p,\omega,m,k\in\nn$,
  correctly computes the MDBB evaluations of $f$ and $f'$ in time
  \begin{align*}
    \gO\big(\min\big(
    &p\log p\log m\llog m + s\llog m,\\
    &(s + k\log s)\log(sp)\llog s\log m\llog m\big)\big).
  \end{align*}
\end{lem}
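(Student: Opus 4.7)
\medskip
\noindent\textbf{Proof plan.}
The plan is to follow the recipe already sketched in the paragraph preceding the lemma: explicitly form the derivative, reduce modulo $\langle x^p-1, m\rangle$, and then run multi-point evaluation on the reduced polynomials, choosing between a DFT-based and a sparse transposed-Vandermonde-based routine. The two terms of the $\min$ correspond to these two choices. Concretely, first compute $f' = \sum_i e_i c_i x^{e_i-1}$ by iterating over the terms of $f$; this costs $\bigo{s\,\llog}$ bit operations since each term's cost is dominated by a product of an $O(\log e_i)$-bit integer with an $O(\log|c_i|)$-bit integer. Then apply \cref{lem:reduce} twice to produce $\bar f, \overline{f'} := f, f' \bmod \langle x^p-1, m\rangle$ at total cost $\bigo{s \llog m}$. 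Finally produce the two output sequences by evaluating $\bar f$ and $\overline{f'}$ at $1,\omega,\ldots,\omega^{k-1}$, and rescale the $f'$-values by $\omega^i$ to obtain the required $(xf')(\omega^i)$ values; since $\omega^p = 1$ in $\zz/m\zz$, any demands for $k > p$ are met by replicating the first $p$ values periodically.

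For the first term of the $\min$, treat $\bar f$ and $\overline{f'}$ as dense polynomials of degree $<p$ with coefficients in $\zz/m\zz$ and run a length-$p$ DFT using any standard radix-2 or Bluestein scheme. This uses $\gO(p\log p)$ operations in $\zz/m\zz$, each costing $\bigo{\log m\,\llog m}$ bit operations, yielding the claimed $\bigo{p\log p\log m\llog m}$ summand; the additive $\bigo{s\llog m}$ comes from the reduction step above. Only the first $\min(k,p)$ entries need to be computed from the DFT; the rest (if any) are free from periodicity.

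For the second term, observe that \cref{lem:bitlen-lb} bounds $\#\bar f, \#\overline{f'}$ by $O(s/\log s)$. Multi-point evaluation of such a sparse polynomial at the $k$ points of a geometric progression $1,\omega,\ldots,\omega^{k-1}$ amounts to a transposed Vandermonde matrix-vector product, which by \cite[Fact~3.2]{GGPR22a} can be performed in $\gOt{}$ operations of the form $(\#\bar f + k)$ times polylogarithmic factors in $\#\bar f$, $k$ and the exponent bound $p$. Substituting $\#\bar f = O(s/\log s)$ turns the leading factor $\#\bar f \cdot \log s$ into $s$, giving the form $(s + k\log s)\log(sp)\llog s$ ring operations, each again costing $\bigo{\log m\,\llog m}$ bit operations. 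The reduction cost $\bigo{s\llog m}$ is absorbed into this bound.

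The main technical step is checking that the sparse-evaluation branch fits the stated complexity: one must confirm that the transposed Vandermonde algorithm applies to sparse polynomials modulo $x^p-1$ with exponents bounded by $p$, so that $\log p$ (rather than $\log D$) appears, and that the bookkeeping through \cref{lem:bitlen-lb} correctly replaces $\#f\cdot\log s$ by the total bit-length $s$. Once those are in place, taking the pointwise minimum of the two running times completes the proof; correctness is immediate because reduction modulo $x^p-1$ does not change the values of $f$ at any $p$th root of unity, and the derivative commutes with this reduction up to the harmless scaling by $\omega^i$.
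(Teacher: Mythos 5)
Your proposal matches the paper's own proof essentially step for step: reduce $f$ (and the coefficients of $xf'$) modulo $\langle x^p-1,m\rangle$ via \cref{lem:reduce}, then take the minimum of a dense length-$p$ Bluestein DFT and a sparse transposed-Vandermonde multi-point evaluation at the geometric progression, converting sparsity to bit-length via \cref{lem:bitlen-lb}. The only differences are cosmetic (you form $f'$ before reducing and rescale by $\omega^i$ afterward, whereas the paper folds the $ce \bmod m$ products into the reduction pass), so the argument is correct and not a new route.
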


\begin{proof}
First we reduce $f$ modulo $\langle x^p-1, m\rangle$ in $\bigo{s\llog m}$
according to \cref{lem:reduce}, and in the same time
separately reduce exponents modulo $p$ and compute each coefficient-exponent product
$(ce\bmod m)$ to derive the coefficients of $xf'$.

Treating the reduced polynomials as dense with size $p$,
we can use Bluestein's algorithm for the DFT \cite{Blu70} in
$\bigo{p\log p}$ operations modulo $m$ to get the $k\leq p$ evaluations, i.e. $\omega\in\zz/m\zz$ is a $p$-PRU. 

Alternatively, we may use a sparse approach:
compute $\omega^{e_i}$ for each exponent in the reduced $f$
(which are the same as in $xf'$) in $\bigo{\#f \log p}$
and then perform two
$k\times \#f$ transposed Vandermonde matrix-vector products,
for a total of
$\bigo{(\#f+k)\log^2 \#f \llog \#f}$ operations modulo $m$.
Applying the bound on $\#f$ from \cref{lem:bitlen-lb} gives
the stated complexity.
\end{proof}

Finally we consider computing an MDBB for the sum or product
of two MDBB polynomials. The sum is straightforward; one can simply sum
the corresponding evaluations. For the evaluations of the derivative of
the product $h=fg$, we can use the product rule from elementary calculus:
\[xh'(x) = xf'(x) \cdot g(x) + xg'(x) \cdot f(x).\]

\begin{lem}\label{lem:sumMDBB}
  There exist procedures \sumMDBB and \prodMDBB which, given two MDBBs
  $\pi_1,\pi_2$ for unknown $f_1,f_2\in\zz[x]$, 
  and any MDBB input tuple $(p,\omega,m,k)$,
  compute respectively MDBB evaluations of the sum $f_1 + f_2$
  or product $f_1 f_2$.
  The cost is a single MDBB evaluation each of $\pi_1$ and $\pi_2$ plus
  (resp.)~
  $\gO(k\log m)$ or~$\gO(k\log m\llog m)$ bit operations.
\end{lem}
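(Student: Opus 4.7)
The plan is to exploit the fact that ring evaluation at a point is a ring homomorphism, so the $f$--values required by the sum or product MDBB can be obtained from the $f$--values returned by each $\pi_j$ without any knowledge of the $f_j$ themselves, and to use elementary calculus (linearity of the derivative and the product rule) to handle the $xf'$--values in the same pointwise manner.

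First I would handle \sumMDBB. Call $\pi_1(p,\omega,m,k)$ and $\pi_2(p,\omega,m,k)$ to obtain, for $0\le i<k$, the four sequences $f_1(\omega^i)$, $xf_1'(\omega^i)$, $f_2(\omega^i)$, $xf_2'(\omega^i)$ modulo $m$. Since evaluation at $\omega^i$ is a ring homomorphism and differentiation is $\zz$--linear, we have $(f_1+f_2)(\omega^i)=f_1(\omega^i)+f_2(\omega^i)$ and $\omega^i(f_1+f_2)'(\omega^i)=\omega^i f_1'(\omega^i)+\omega^i f_2'(\omega^i)$, so $2k$ modular additions suffice. Each addition costs $O(\log m)$ bits, giving a total of $O(k\log m)$ bit operations beyond the two MDBB calls.

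For \prodMDBB, I would start from the same four sequences returned by $\pi_1$ and $\pi_2$. For the values of $h=f_1f_2$, the homomorphism property again gives $h(\omega^i)=f_1(\omega^i)\cdot f_2(\omega^i)$ modulo $m$. For the $xh'$--values, I would apply the product rule, computing
\[
  \omega^i h'(\omega^i) \;=\; \bigl(\omega^i f_1'(\omega^i)\bigr)\cdot f_2(\omega^i) \;+\; \bigl(\omega^i f_2'(\omega^i)\bigr)\cdot f_1(\omega^i) \pmod m.
\]
This requires $O(k)$ multiplications and additions in $\zz/m\zz$, each costing $O(\log m\,\llog m)$ bits with fast integer multiplication, for a total extra cost of $O(k\log m\,\llog m)$.

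There is no real obstacle: the correctness is immediate from the homomorphism property and the product rule, and the complexity is a direct count of $O(k)$ modular operations. The only thing worth being careful about is noting that a single MDBB call to each of $\pi_1$ and $\pi_2$ already delivers both the function and derivative evaluations we need, so no extra black-box queries are spent to assemble the derivative of $f_1+f_2$ or $f_1f_2$.
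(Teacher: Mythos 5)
Your proposal is correct and matches the paper's own (very brief) justification: the paper likewise sums the corresponding evaluations for \sumMDBB{} and invokes the product rule $xh'(x)=xf'(x)\cdot g(x)+xg'(x)\cdot f(x)$ for \prodMDBB, with the same $\gO(k)$ count of modular additions or multiplications. No gaps.
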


\subsection{Images of MDBBs}

Our main subroutines for unbalanced interpolation need
at each step to recover $f$ and $xf'$ modulo
$\langle x^p-1, m\rangle$, and to update such images
with respect to a partially-recovered explicit polynomial $\fs$.
The next (almost trivial) algorithm shows how to compute such modular images
from MDBB evaluations.

\begin{algorithm}\caption{$\MBBimg(\pi,p,\omega,m)$}\label{algo:MBBimg}
\begin{algorithmic}[1]
  \Input{MDBB $\pi$ for unknown $f\in\zz[x]$;
    and $p,\omega,m\in\nn$ s.t.\ $\omega$ is a $p$-PRU modulo $m$}
  \Output{$f$ and $xf'$ modulo $\langle x^p-1, m\rangle$}
  \State $(\alpha_i)_{0\le i < k}, (\gamma_i)_{0\le i < k}
    \gets \pi(p,\omega,m,p)$
  \State $\hat{\alpha}_0,\ldots,\hat{\alpha}_{p-1} \gets
    \IDFT_\omega(\alpha_0,\ldots,\alpha_{p-1})$
    over $\zz/m\zz$
  \State $\hat{\gamma}_0,\ldots,\hat{\gamma}_{p-1} \gets
    \IDFT_\omega(\gamma_0,\ldots,\gamma_{p-1})$
    over $\zz/m\zz$
  \State \Return
    $\sum_{i=0}^{t-1}\hat{\alpha}_i x^i$
    and
    $\sum_{i=0}^{t-1}\hat{\gamma}_i x^i$
\end{algorithmic}
\end{algorithm}

The running time is straightforward using Bluestein's algorithm \cite{Blu70}
for the IDFTs.

\begin{lem}\label{lem:MBBimg}
  \Cref{algo:MBBimg} \MBBimg{} always produces the correct output using 2 calls to the MDBB with $k=p$ and an additional
  \(\bigo{p\log m\log p \llog m}\) bit operations.
\end{lem}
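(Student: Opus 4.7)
The plan is to decouple correctness from complexity, since both follow from textbook facts once the MDBB specification is invoked. For correctness, I would argue directly from the definition of the MDBB together with the DFT inversion formula; for the cost bound, I would invoke Bluestein's algorithm and account for the cost of a ring operation modulo $m$.

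For correctness, the single invocation $\pi(p,\omega,m,p)$ produces, by \cref{def:mdbb}, sequences $\alpha_i \equiv f(\omega^i)\pmod m$ and $\gamma_i \equiv \omega^i f'(\omega^i)\pmod m$ for $i=0,\ldots,p-1$ (so the ``$k$'' appearing inside \cref{algo:MBBimg} is $p$ throughout, and a single MDBB call already delivers both vectors as specified). Since $\omega$ is a primitive $p$-th root of unity modulo $m$, any polynomial reduced modulo $\langle x^p-1, m\rangle$ is uniquely determined by its values at $1,\omega,\ldots,\omega^{p-1}$ in $\zz/m\zz$, and the coefficients are recovered exactly by the inverse DFT with respect to $\omega$. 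Applying this to the sequence $(\alpha_i)$ yields $f\bmod\langle x^p-1,m\rangle$, and applying it to $(\gamma_i)$ yields $xf'\bmod\langle x^p-1,m\rangle$, which is what the algorithm returns.

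For the bit complexity, I would separate the MDBB evaluation (which is charged to the caller) from the post-processing, which consists of the two inverse DFTs of length $p$ over $\zz/m\zz$. Because $p$ need not be a power of two, I would use Bluestein's chirp transform \cite{Blu70}, which reduces each length-$p$ DFT to a length-$O(p)$ convolution and therefore uses $O(p\log p)$ operations in $\zz/m\zz$. Each ring operation in $\zz/m\zz$ takes $O(\log m\,\llog m)$ bit operations via fast integer multiplication, so the two IDFTs together cost $O(p\log p\,\log m\,\llog m)$ bit operations, matching the stated bound.

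I do not anticipate a genuine obstacle here: the only mildly delicate point is that Bluestein's algorithm needs a $p$-PRU in $\zz/m\zz$, which is exactly the hypothesis on $\omega$ in the MDBB input contract, so no auxiliary root-of-unity construction is required. Everything else is bookkeeping of the per-operation bit cost.
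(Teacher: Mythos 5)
Your proposal is correct and matches the paper's (very terse) justification exactly: the paper gives no formal proof beyond noting that the running time follows from Bluestein's algorithm for the IDFTs, and your accounting of $O(p\log p)$ ring operations at $O(\log m \llog m)$ bit operations each is precisely the intended argument, with correctness being immediate from \cref{def:mdbb} and DFT inversion. The only cosmetic discrepancy is that you count one MDBB invocation where the lemma statement says two; your reading agrees with the pseudocode (a single call to $\pi$ returns both sequences), and this does not affect the asymptotic claim.
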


\section{Coefficient collisions and carries}\label{sec:carries}

Our general approach to find the terms of $f$ with largest coefficients
will be to exploit the fact that there cannot be too many of them.
So we will take images of $f \bmod (x^p-1)$, for small values of $p$,
and extract the largest terms.
The goal of this %
section is to define certain
coefficient size boundaries, bound the probability of larger terms
colliding modulo $x^p-1$, and prove that the carries from smaller terms
colliding will not affect the larger terms too much.

For the remainder, assume $s,D,H$ are bounds (resp.)
on the bit-length, degree, and height of an unknown nonzero
$f\in\zz[x]$.

We define 4 categories of coefficients according
to size:

\begin{definition}\label{def:coeff-sizes}
  For any nonzero term $c x^e$ of $f$:
  \begin{itemize}
    \item If $\abs{c} < H^{1/6}$, we say $c$ is \emph{small}
    \item If $\abs{c} \ge H^{1/6}$, we say $c$ is \emph{medium}
    \item If $\abs{c} \ge \tfrac{1}{2}H^{13/30}$, we say $c$ is \emph{large}
    \item If $\abs{c} \ge H^{1/2}$, we say $c$ is \emph{huge}
  \end{itemize}
\end{definition}

Note that every huge term is also considered large, and every huge or
large term is also considered medium. The number of terms in each
category is also limited by $s$; for example, the number of huge terms
is at most $2s/\log_2 H$.

We now provide conditions on a randomly chosen prime $p$ to avoid
collisions modulo $(x^p-1)$. Formally, we say two nonzero terms $c_1 x^{e_1}$
and $c_2 x^{e_2}$ \emph{collide} modulo $p$ iff
$e_1 \equiv e_2 \bmod p$. The following lemma follows the standard
pattern from numerous previous results
(see for example \cite[Proposition 2.5]{GiorgiGrenetPerretduCray2023})
relying on the density of primes in an interval \cite{rosser1962}.

  \begin{lem}\label{lem:one-collision}
  Let $D,n>0$ be given and
  $e_0,e_1,\ldots,e_n$ be $n+1$ distinct exponents with each
  $0\le e_i \le D$.
  If $\lambda \ge \max\left(21, 3n\lb D\right)$, and
  if $p$ a random prime from
  $(\lambda,2\lambda)$, then with probability at least $\tfrac{1}{2}$,
  exponent $e_0$ does not collide with any of $e_1,\ldots,e_n$
  modulo $p$.
\end{lem}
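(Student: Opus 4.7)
The plan is to bound the number of ``bad'' primes $p \in (\lambda, 2\lambda)$ --- those for which $e_0 \equiv e_i \pmod{p}$ for some $i \in \{1,\dots,n\}$ --- and compare this to the total number of primes in the interval, so that the ratio is at most $\tfrac{1}{2}$.

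First, I would observe that $p$ is bad precisely when $p$ divides one of the nonzero integers $e_0 - e_i$, equivalently when $p$ divides the single product
\[
  P \;=\; \prod_{i=1}^{n} (e_0 - e_i),
\]
which is nonzero (since the $e_i$ are distinct) and has absolute value at most $D^{n}$. The key trick is that every bad prime is at least $\lambda$, so each contributes a factor of at least $\lambda$ to $|P|$. Hence the number of bad primes in $(\lambda,2\lambda)$ is at most $\lb(D^{n})/\lb\lambda = n\lb D/\lb\lambda$, rather than the naive $n\lb D$. It is precisely this sharpening that makes the linear hypothesis $\lambda \ge 3n\lb D$ (without a logarithmic factor) sufficient.

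Next, I would lower-bound the number of primes in $(\lambda,2\lambda)$ using the Rosser--Schoenfeld style bounds on $\pi$ from \cite{rosser1962}. Concretely, for $\lambda \ge 21$ one can establish an inequality of the shape $\pi(2\lambda)-\pi(\lambda) \ge \alpha\,\lambda/\lb\lambda$ with some constant $\alpha\ge 2/3$, possibly supplemented by a direct numerical check on a small initial range of $\lambda$ where the asymptotic bound is still loose. Combined with $\lambda\ge 3n\lb D$, this yields
\[
  \pi(2\lambda)-\pi(\lambda) \;\ge\; \frac{2 n\lb D}{\lb\lambda},
\]
i.e.\ at least twice the upper bound on the bad-prime count, so a uniformly random prime in $(\lambda,2\lambda)$ is good with probability at least $\tfrac{1}{2}$.

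The main obstacle lies in calibrating the prime-density constant. Asymptotically this is comfortable --- the prime number theorem gives $\alpha \to \ln 2 \approx 0.693 > 2/3$ --- but the required inequality is only barely true near the threshold $\lambda=21$, where $(\lambda,2\lambda)$ contains only five primes. The delicate part of the write-up is therefore either to extract a sufficiently tight inequality from \cite{rosser1962} valid from $\lambda = 21$ onwards, or to split the argument into an asymptotic regime plus a finite verification at the bottom.
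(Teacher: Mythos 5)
Your proposal is correct and is exactly the standard argument the paper invokes (it cites \cite{GiorgiGrenetPerretduCray2023} and \cite{rosser1962} rather than writing the proof out): bound the bad primes by $n\lb D/\lb\lambda$ via the product $\prod_i(e_0-e_i)$ of magnitude at most $D^n$ whose prime divisors in the interval each exceed $\lambda$, and compare against the prime count in $(\lambda,2\lambda)$. The calibration you worry about is unproblematic: Rosser--Schoenfeld gives $\pi(2\lambda)-\pi(\lambda)>3\lambda/(5\ln\lambda)$ for $\lambda\ge 20.5$, i.e.\ $\alpha=3/(5\ln 2)\approx 0.866$ in your notation, so with $\lambda\ge 3n\lb D$ the bad primes are fewer than a $1/(3\alpha)<1/2$ fraction with room to spare, and no separate finite verification is needed.
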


By repeatedly choosing primes, we can ensure that every term in $f$
avoids collisions at least once. Note here that the set $a_1,\ldots,a_n$
need not be actual exponents in $f$.
\begin{cor}\label{cor:many-collisions}
  Let $D,n>0$ be given and $a_1,\ldots,a_n$ be distinct integers
  all satisfying $0\le a_i \le D$, and let $P$ be a list of primes
  in $(\lambda,2\lambda)$, each chosen independently and uniformly at
  random.
  If $\lambda \ge \max(21,3n\lb D)$ and $\#P\ge 2\lb s + 3$, then with
  probability at least $1-1/(4s\lb s)$, for every term
  $c_ix^{e_i}$ of $f$, there exists at least one $p\in P$ such that
  $e_i$ does not collide (modulo $p$) with any $a_j\ne e_i$.
\end{cor}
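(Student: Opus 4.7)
The plan is to apply \cref{lem:one-collision} to each term of $f$ individually, and then take a union bound over all terms, using \cref{lem:bitlen-lb} to control $\#f$.

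First I would fix an arbitrary nonzero term $c_i x^{e_i}$ of $f$. The set $\{a_j : 1 \le j \le n,\ a_j \ne e_i\}$ consists of at most $n$ distinct integers in $[0,D]$, so the hypotheses of \cref{lem:one-collision} are met with $e_i$ playing the role of $e_0$ and the surviving $a_j$'s playing the role of $e_1,\ldots,e_n$; the assumption $\lambda \ge \max(21, 3n\lb D)$ is exactly what that lemma requires. Hence a single prime $p$ drawn uniformly from the primes in $(\lambda,2\lambda)$ separates $e_i$ from every $a_j \ne e_i$ modulo $p$ with probability at least $1/2$. Because the primes in $P$ are drawn independently, the probability that \emph{no} $p \in P$ achieves this separation is at most $2^{-\#P}$.

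Next I would take a union bound over all terms of $f$. The probability that there exists some term $c_i x^{e_i}$ whose exponent collides with some $a_j \ne e_i$ modulo every prime in $P$ is at most $\#f \cdot 2^{-\#P}$. By \cref{lem:bitlen-lb}, $\#f < 2s/\lb s$, and the hypothesis $\#P \ge 2\lb s + 3$ yields $2^{-\#P} \le 2^{-3}/s^2 = 1/(8s^2)$. Multiplying these two bounds produces a failure probability below $(2s/\lb s)\cdot 1/(8s^2) = 1/(4s\lb s)$, which is the complement of the claimed success probability.

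The main obstacle is quite minor: it is essentially bookkeeping to verify that the preconditions of \cref{lem:one-collision} are satisfied when the ``other exponents'' are the $a_j \ne e_i$ rather than actual exponents of $f$ (this is fine because \cref{lem:one-collision} does not require those integers to be exponents of $f$, only distinct and bounded by $D$), and a short arithmetic check that the slack $2\lb s + 3$ in $\#P$ is exactly tuned to hit $1/(4s\lb s)$ after combining with the sparsity bound $2s/\lb s$.
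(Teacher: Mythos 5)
Your proof is correct and follows the same route as the paper's: apply \cref{lem:one-collision} to each term, use independence of the primes in $P$ to get a per-term failure probability of at most $2^{-\#P}\le 1/(8s^2)$, and finish with a union bound over the $\#f<2s/\lb s$ terms via \cref{lem:bitlen-lb}. The arithmetic $(2s/\lb s)\cdot(1/(8s^2))=1/(4s\lb s)$ checks out, so nothing is missing.
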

\begin{proof}
  For any single term, by \cref{lem:one-collision} the probability it
  is in collision with at least one of the $a_j$'s
  for \emph{every} $p\in P$ is
  at most $1/2^{2\lb s + 3} < 1/(8s^2)$.
  Applying the bound on $\#f$ from \cref{lem:bitlen-lb}, and taking the
  union bound, gives the stated result.
\end{proof}

Similar to prior works on sparse interpolation, our algorithm will
recover the exponent $e_0$ of a term $c_0 x^{e_0}$ by division of
coefficients between $f$ and $xf'$ as produced by the MDBB. But this
division is not exactly equal to $e_0 c_0 / c_0$ because we cannot avoid
collisions with \emph{small} terms in $f$. The following crucial lemma
shows that such collisions still allow exact recovery of the exponent
$e_0$ whenever the coefficient $c_0$ is large.

\begin{lem}\label{lem:errors}
  Suppose $\lb H \ge \max\left(61,\ 15\lb s,\ 6\lb D\right)$ and
  $p,m\in\nn$ with $m\ge 4H^{7/6}$,
  and let $c_0x^{e_0}$ be any large term of $f$ as in \cref{def:coeff-sizes}.
  If this term does not collide with any other medium term modulo $p$,
  then $e_0$ can be accurately recovered by
  approximate division of coefficients in $f$ and $f'$ modulo
  $\langle x^p-1, m\rangle$.
\end{lem}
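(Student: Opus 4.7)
The plan is to set $j := e_0 \bmod p$, read off the coefficients $a, b \in \zz$ of $x^j$ in $f \bmod \langle x^p-1, m\rangle$ and $xf' \bmod \langle x^p-1, m\rangle$ (interpreting each residue as a signed integer in $(-m/2, m/2]$), and recover $e_0 = \round{b/a}$ by rounded rational division. Letting $T$ denote the set of indices $i \neq 0$ with $e_i \equiv e_0 \pmod p$, the hypothesis that $c_0 x^{e_0}$ does not collide with any other medium term forces $\abs{c_i} < H^{1/6}$ for every $i \in T$, so over $\zz$ we have $a = c_0 + \sum_{i \in T} c_i$ and $b = e_0 c_0 + \sum_{i \in T} e_i c_i$.

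Next I would check that the modular images faithfully encode $a$ and $b$, i.e.\ that $\abs{a}, \abs{b} < m/2$ so that no wraparound occurs. Since each nonzero term of $f$ contributes at least $2$ bits to $\bitlenx{f}$, one has $\#f \leq s/2$, hence $\sum_{i \in T} \abs{c_i} \leq (s/2)\, H^{1/6}$. Plugging in $s \leq H^{1/15}$ and $D \leq H^{1/6}$ (which follow from $\lb H \geq 15 \lb s$ and $\lb H \geq 6 \lb D$), this yields $\abs{a} \leq H + (s/2)H^{1/6} < 2H$ and $\abs{b} \leq D\abs{c_0} + D \sum_T \abs{c_i} < 2H^{7/6} \leq m/2$, as needed.

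The heart of the argument is the approximation estimate. The identity $b - e_0 a = \sum_{i \in T}(e_i - e_0) c_i$ gives $\abs{b - e_0 a} \leq D \sum_{i \in T} \abs{c_i} \leq H^{2/5}/2$, and combining $\abs{c_0} \geq H^{13/30}/2$ with the bound on $\sum_T \abs{c_i}$ yields $\abs{a} \geq (1 - H^{-1/5})\, H^{13/30}/2$. Dividing produces $\abs{b/a - e_0} \leq H^{-1/30}/(1 - H^{-1/5})$, which is strictly less than $1/2$ precisely when $\lb H \geq 61$. Integer rounding then recovers $e_0$ exactly.

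The hard part is that the exponents in the coefficient-size thresholds ($1/6$, $13/30$, $1/2$) and the constant $61$ are matched with essentially no slack. In particular, the refined bound $\#f \leq s/2$ is needed in the small-$s$ regime $\lb s \leq \lb H/15$ that applies here, since the coarser count $\#f < 2s/\lb s$ from \cref{lem:bitlen-lb} would be too weak. The availability of $xf'$ alongside $f$ (supplied by the MDBB interface of \cref{def:mdbb}) is equally essential, since it is the ratio $b/a$ and not either value alone that extracts $e_0$.
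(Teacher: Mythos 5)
Your proof is correct and follows essentially the same route as the paper's: interpret the colliding coefficient sums in $f$ and $xf'$ as signed integers modulo $m\ge 4H^{7/6}$, bound the small-term contribution by roughly $sH^{1/6}\le H^{7/30}$, and show the rounding error $\abs{b/a-e_0}<1/2$ using $\abs{c_0}\ge\tfrac12 H^{13/30}$ and $D\le H^{1/6}$. The only (harmless) differences are in constants: the paper simply uses $\#f\le s$ rather than your refined $\#f\le s/2$ (so the sharper count is not actually necessary), and your phrase ``precisely when $\lb H\ge 61$'' overstates matters since the inequality also holds for somewhat smaller $H$ --- but neither point affects correctness.
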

\begin{proof}
  Let $\mathcal{S}$ be the set of indices of terms $c_ix^{e_i}$ in $f$
  that collide with $e_0$, so that $e_i\equiv e_0 \bmod p$ for all
  $i\in\mathcal{S}$. By assumption, all corresponding coefficients are
  ``small'' as in \cref{def:coeff-sizes}, so each
  each $\abs{c_i} < H^{1/6}$ for $i \in \mathcal{S}$.

  First we establish a bound on the magnitude of the sum of any
  number of small coefficients.
  Using the fact that
  $\#\mathcal{S} \le \#f \le s$ and the assumption $\lb H \ge 15\lb s$, we have,
  for any set of small term indices $\mathcal{S}$,
  \begin{equation}\label{eqn:small-bound}
    \textstyle\abs{\sum_{i\in\mathcal{S}} c_i}
    \le \sum_{i\in\mathcal{S}}\abs{c_i}
    < \#\mathcal{S} H^{1/6} \le s H^{1/6} \le H^{7/30}.
  \end{equation}

  Then the terms corresponding to $c_0x^{e_0}$ in $xf'$ and in $f$
  mod $\langle x^p-1, m\rangle$ are respectively
  \begin{eqnarray*}
    \textstyle c_0e_0+\sum_{i\in\mathcal{S}}c_ie_i &\mod m \\
    \textstyle c_0+\sum_{i\in\mathcal{S}} c_i &\mod m
  \end{eqnarray*}

  Note that, from the bound \eqref{eqn:small-bound} above, and the facts
  that $\abs{c_0}\le H$ and $e_i\le D\le H^{1/6}$, both of these
  are less than $2H^{7/6}$ in absolute value, and therefore modulo
  $m \ge 4H^{7/6}$ they can be recovered exactly as signed integers.

  Our aim is to show that the quotient of these integers, rounded to
  the nearest integer, equals $e_0$ exactly, or equivalently that
  \begin{equation}\label{eqn:errors}
    \abs{\frac{\textstyle\sum_{i\in\mathcal{S}} c_ie_i - e_0\sum_{i\in\mathcal{S}}c_i}%
      {\textstyle c_0 + \sum_{i\in\mathcal{S}}c_i}}
    < \frac{1}{2}
  \end{equation}

  Using \eqref{eqn:small-bound}, along with the bound $D\le H^{1/6}$ from the
  statement of the lemma, the numerator magnitude from
  \eqref{eqn:errors} is at most
  \[\textstyle
    \abs{\sum_{i\in\mathcal{S}} c_i(e_i-e_0)}
    \le D\sum_{i\in\mathcal{S}}\abs{c_i}
    < DH^{7/30}
    \le H^{2/5}.
  \]

  A lower bound on the denominator from \eqref{eqn:errors}
  can be obtained similarly, since $\abs{c_0}\ge \frac{1}{2}H^{13/30}$ by the
  definition of a large term:
  \[\textstyle
    \abs{c_0 + \sum_{i\in\mathcal{S}}c_i}
    \ge \abs{c_0} - \sum_{i\in\mathcal{S}}\abs{c_i}
    \ge H^{2/5}(\tfrac{1}{2}H^{1/30} - H^{-5/30})\]
  Because $\lb H \ge 61$, one can easily confirm that
  $\tfrac{1}{2}H^{1/30} - H^{-5/30} > 2$.
  Therefore the denominator is at least $2H^{2/5}$ and the inequality from
  \eqref{eqn:errors} is satisfied.
\end{proof}

\section{Recovering huge terms}\label{sec:slices}

This section presents the heart of our new algorithm for
unbalanced interpolation,
\cref{algo:UintSlice} \UintSlice, which recovers an explicit
partial interpolant $f^*\in\zz[x]$ such that, with high probability,
$\height{f-f^*}\le \sqrt{H}$.

Recall the formal notion of small/medium/large/huge terms from
\cref{def:coeff-sizes}.
Slice interpolation works in two phases, first calling subroutine
\cref{algo:SupSup} \SupSup{} to find (a superset of) the support of
all large terms, and secondly using this set to reliably recover the
huge terms' coefficients.

{ All the following algorithms need to find a $p$-PRU in $\zz/m\zz$ for a random prime $p$ in $(\lambda,2\lambda)$,
  and a sufficiently large integer $m$. For that, we rely on \cite[Fact 2.2]{GGPR22a} to find \emph{w.h.p} a triple
  $(p,q,\omega)$ such that $\omega$ is a $p$-PRU in $\fq$, and use \cite[Algorithm LiftPRU]{{GGPR22a}} to lift $\omega$
  to a $p$-PRU modulo $m$, where $m=q^k$ is large enough. This will require $\bigot{\log H \log \lambda} +
  \polylog(\lambda)$ bit operations for $\log m\in\bigo{\log H}$. Since $\log H = \bigo{s}$ this cost never dominates
  in our  analysis.}

The first subroutine, \cref{algo:SupSup} \SupSup{},
works by sampling from \MBBimg{}
multiple times, for sufficiently large
exponent moduli $p$ so that most of the medium (or larger) terms do not
collide. From \cref{lem:errors} in the previous section, even with some
small-term collisions, we will be able to accurately recover the
exponents of any large terms. Erroneous
entries in \esuper{} will likely result from medium/large/huge collisions, but
that is acceptable as long as we don't ``miss'' any true exponents of
large terms.

\begin{algorithm}\caption{$\SupSup(\pi,s,D,H)$}\label{algo:SupSup}
\begin{algorithmic}[1]
  \Input{MDBB $\pi$ for unknown $f\in\zz[x]$,
    bounds $s,D,H$ on (resp.)
    the bit-length, degree, and height of $f$,
    satisfying the conditions of \cref{lem:errors}}
  \Output{Set $\esuper\subset \nn$ which contains the exponents of all
    terms in $f$ with large (or huge) coefficients w.h.p.}
  \State $\lambda \gets \max(21, 18 s \lb D / \lb H)$;\quad
    $\esuper \gets \{\}$
  \For{$i = 1,2,\ldots,\ceil{2\lb s}+3$}
  \State Choose random prime $p\in(\lambda,2\lambda)$
  \State Construct $m,\omega$ with $m\ge 4H^{7/6}$,
      and $\omega$ a $p$-PRU mod $m$
  \State $g, h \gets \MBBimg(\pi,p,\omega,m)$

      \ForEach{corresponding terms $ax^e, bx^e$ in $g$ and $h$}
      \If{$\abs{a} \ge \frac{1}{2}H^{13/30}$ and $0 \le \round{b/a} \le D$}
        \If{$\#\esuper < 60s(\lb s + 2)/(13\lb H)$}
          \State $\esuper\gets\esuper\cup\{\round{b/a}\}$
        \Else \ \Return $\{\}$
        \EndIf
      \EndIf
    \EndFor
  \EndFor
  \State \Return \esuper
\end{algorithmic}\end{algorithm}

\begin{lem}\label{lem:SupSup}
  \Cref{algo:SupSup} \SupSup{} always produces a set
  $\esuper\subseteq\{0,\ldots,D\}$ with
  $\#\esuper \in \bigo{s\log s / \log H}$;
  makes $\bigo{\log s}$ MDBB calls with
    $p,k\in O(s\log D / \log H)$ and
    $\log m\in O(\log H)$;
  and uses $\bigo{s\log D \log^2 s \llog H)}$
  additional bit operations.
  With probability at least $1-1/(4s\lb s)$, \esuper{}
  includes the exponents of every large term of $f$.
\end{lem}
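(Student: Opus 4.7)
The plan is to verify the three asserted properties in order: cost and structural bounds, w.h.p.\ inclusion of every large exponent, and the fact that the explicit size cap never fires in the good event.

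\emph{Cost and structural bounds.} The outer loop makes $\lceil 2\lb s\rceil + 3 = O(\log s)$ iterations, yielding $O(\log s)$ MDBB calls. Each prime lies in $(\lambda,2\lambda)$ with $\lambda = \max(21, 18 s\lb D/\lb H)$, so $p = k = O(s\log D/\log H)$ and $\log m = O(\log H)$. Because $p \le \lambda^2 = O(s^2)$ (using $\log D \le s$), we have $\log p = O(\log s)$. Plugging these into \cref{lem:MBBimg} gives per-iteration extra cost $O(p\log p\log m\llog m) = O(s\log D\log s\llog H)$, and summing over the iterations gives the stated $O(s\log D\log^2 s\llog H)$. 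The inner per-bucket work (approximate division, rounding, set insertion on $O(\log H)$-bit numbers) is absorbed. Finally, $\esuper\subseteq\{0,\ldots,D\}$ follows directly from the test $0\le \round{b/a}\le D$, and the output either is $\{\}$ (early return) or respects the explicit cap $60s(\lb s+2)/(13\lb H) = O(s\log s/\log H)$.

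\emph{Inclusion of all large exponents.} Let $A$ be the set of exponents of the \emph{medium} terms of $f$; since each medium coefficient has absolute value $\ge H^{1/6}$ and therefore occupies more than $\lb H / 6$ bits, $\#A \le 6s/\lb H$ and hence $3\#A\lb D \le 18 s\lb D/\lb H \le \lambda$. By \cref{cor:many-collisions} applied to $A$ with the $\ge 2\lb s + 3$ primes sampled in the outer loop, with probability at least $1-1/(4s\lb s)$ every exponent of $f$ avoids collision with $A$ modulo at least one $p\in P$. For any large term $c_0 x^{e_0}$ and its isolating prime $p$, the bucket $e_0\bmod p$ in the images $g,h$ returned by \MBBimg{} collects $c_0$ together with only small-term collisions. \cref{lem:errors} then gives $\round{b/a} = e_0$ exactly, and the residue $|a|\ge|c_0|-H^{7/30}$ exceeds $\tfrac12 H^{13/30}$ in the regime $\lb H\ge 61$, so the filter passes and $e_0$ is inserted into $\esuper$ in that iteration (provided the cap has not already fired).

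\emph{Size cap does not fire; main obstacle.} It remains to rule out the early return $\esuper\gets\{\}$ under the good event. By \eqref{eqn:small-bound}, a bucket containing only small terms has $|a|<H^{7/30}<\tfrac12 H^{13/30}$, so every filter-passing bucket must contain a medium coefficient whose magnitude (after at most $\pm H^{7/30}$ small-term adjustment) is $\ge \tfrac12 H^{13/30} - H^{7/30}$. Such a ``near-large'' coefficient uses $\Omega(\lb H)$ bits, and a careful bit-length count caps the total number of near-large terms across all of $f$ at $(30/13)s/\lb H$. This limits the filter-passing buckets per iteration to $(30/13)s/\lb H$; multiplying by the $\le 2\lb s + 4$ iterations yields at most $(60/13)(\lb s+2)s/\lb H$ insertions, exactly matching the cap. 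The main obstacle is precisely this constant-chase: identifying the intermediate ``near-large'' category and matching the numerical constants $18$, $13/30$, $60/13$, $2\lb s+3$ so that the per-iteration bucket count, the sampling bound of \cref{cor:many-collisions}, and the algorithmic cap all align. It also requires noting that medium--medium collisions can only \emph{reduce} the number of filter-passing buckets (since colliding medium terms share a single bucket), so the worst case for the cap is the collision-free case we have already bounded.
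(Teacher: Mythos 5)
Your proof follows the paper's argument almost exactly for two of the three parts: the cost analysis via \cref{lem:MBBimg}, and the probabilistic correctness via \cref{cor:many-collisions} applied to the at most $6s/\lb H$ medium exponents followed by \cref{lem:errors}, are the same as in the paper. The place you diverge is the size/early-return argument, and your version of it has a real flaw: you attribute each filter-passing bucket to a single ``near-large'' term of $f$, but a residue coefficient can reach $\tfrac12 H^{13/30}$ as the sum of several medium terms none of which is individually near-large (e.g.\ two colliding terms of magnitude $\tfrac14 H^{13/30}$), so counting near-large terms of $f$ does not bound the number of passing buckets; falling back to counting all medium terms gives $6s/\lb H$ per iteration, which over $\ceil{2\lb s}+3$ iterations exceeds the algorithm's cap of $60s(\lb s+2)/(13\lb H)$. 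The paper's count sidesteps this by working on the residue polynomial itself: $f \bmod (x^p-1)$ has total bit-length at most $s$ (merging coefficients and reducing exponents only decreases bit-length), and each of its coefficients passing the filter occupies at least $\tfrac{13}{30}\lb H$ bits, so at most $\tfrac{30}{13}s/\lb H$ integers are added per iteration --- which over $\ceil{2\lb s}+3 \le 2(\lb s+2)$ iterations is exactly the cap, so the early return never fires in the good event. Your count still yields the asymptotic bound $\#\esuper\in\bigo{s\log s/\log H}$, but the exact constant is what rules out the early return (and hence preserves the probabilistic correctness claim), so the residue-level count is the missing ingredient rather than a cosmetic variant.
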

\begin{proof}
  Let $c x^e$ be an arbitrary large term of $f$.
  By \cref{def:coeff-sizes},
  the number of medium terms in $f$ is at most $6s/\lb H$.
  Applying \cref{cor:many-collisions}, we see that with at least
  the stated probability, for each large term of $f$ there exists
  an iteration $i$ where that term does not collide with any other
  medium (or larger) term.
  And then by \cref{lem:errors}, the actual exponent of that
  term is accurately recovered from rounding division of coefficients
  on step $i$ and added to \esuper{}.
  This proves the stated probabilistic correctness.

  For the size of \esuper{},
  observe that
  the bit-length of
  $f\bmod(x^p-1)$ is at most $s$, so there are at most
  $\tfrac{30}{13}s/\lb H$ new integers added to \esuper{} in each of the
  $\bigo{\log s}$ iterations.

  The bit complexity is dominated by the cost of the
  $\bigo{\log s}$ calls to \MBBimg{}, which by the choices of $p,m$ and
  from \cref{lem:MBBimg} gives the stated cost.
\end{proof}

We now use the set \esuper{} to recover the coefficients of the huge terms (only). The same number of MDBBImage
evaluations are performed, but with slightly larger p values to avoid collisions with all exponents (whether correct or
not) in \esuper{}. Because \esuper{} contains the exponents of all large terms, and any errors from medium and small terms
cannot propagate up to the ``huge'' level, every huge term obtained on this step is accurately added to the result.

\begin{algorithm}\caption{$\UintSlice(\pi,s,D,H)$}\label{algo:UintSlice}
\begin{algorithmic}[1]
  \Input{MDBB $\pi$ for unknown $f\in\zz[x]$,
    and bounds $s,D,H$ on (resp.)
    the bit-length, degree, and height of $f$,
    satisfying the conditions of \cref{lem:errors}}
  \Output{$\fs\in\zz[x]$ such that
    w.h.p.\ $\height{f-\fs} \le \sqrt{H}$}
  \State $\esuper \gets \SupSup(\pi,s,D,H)$
  \State $\lambda \gets \max(21, 3\cdot \#\esuper\cdot \lb D)$;\quad
    $\fs \gets 0$
  \For{$i = 1,2,\ldots,\ceil{2\lb s}+3$}
    \State Choose random prime $p\in(\lambda,2\lambda)$
    \State Construct $m,\omega$ with $m\ge 2H$
    and $\omega$ a $p$-PRU mod $m$
    \State $\pi^* \gets \sumMDBB(\pi,\extoMDBB(-\fs))$
    \State $g.h \gets \MBBimg(\pi^*,p,\omega,m)$%
    \State $E_j\gets\{\}$ for $0\le j < p$
    \ForEach{$a \in \esuper$}
      \State $E_{a\brem p} \gets E_{a\brem p} \cup \{a\}$
    \EndFor
    \ForEach{term $cx^i$ in $g$}
      \If{$\abs{c} \ge \frac{1}{2}H^{1/2}$ and $\#E_i = 1$}
        \State $e \gets$ the unique element of $E_i$
        \If{$\fs$ does not contain a term with $x^e$}
          \State $\fs \gets \fs + cx^e$
          \If{$\bitlenx{\fs} > s$}\ \Return 0
          \EndIf
        \EndIf
      \EndIf
    \EndFor
  \EndFor
  \State \Return $\fs$
\end{algorithmic}\end{algorithm}

\begin{lem}\label{lem:UintSlice}
  \Cref{algo:UintSlice} \UintSlice{}
  always produces a polynomial $\fs\in\zz[x]$ with
  bit-length at most $s$,
  makes $\bigo{\log s}$ MDBB calls with
  $p,k\in\bigo{s\log D\log s/\log H}$
  and $\log m \in \bigo{\log H}$,
  and uses $\bigo{s\log D\log^3 s\llog H}$
  additional bit operations.
  It returns $\fs\in\zz[x]$ such that, with probability at least
  $1 - 1/(2s\lb s)$, the height of $(f-\fs)$ is less than $\sqrt{H}$.
\end{lem}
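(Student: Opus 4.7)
The bit-length bound on $\fs$ is immediate from the explicit guard at each update: if adding a term would push $\bitlenx{\fs}$ past $s$, the algorithm returns $0$. For the complexity, I decompose the cost as the initial \SupSup{} call (bounded by \cref{lem:SupSup}) plus $\ceil{2\lb s}+3$ main-loop iterations. Each iteration constructs a sum MDBB via \cref{lem:sumMDBB} and \cref{lem:extoMDBB} (using $\bitlenx{\fs}\le s$), then invokes \MBBimg. Plugging $p\in O(s\log D \log s / \log H)$ and $\log m \in O(\log H)$ into \cref{lem:MBBimg,lem:sumMDBB,lem:extoMDBB} yields per-iteration cost $O(s\log D \log^2 s \llog H)$ plus two MDBB calls to $\pi$ with $k=p$. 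Summing over the $O(\log s)$ iterations produces the stated MDBB count and the $O(s\log D \log^3 s \llog H)$ additional bit operations.

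For correctness, I first condition on two good events. By \cref{lem:SupSup}, $\esuper$ contains the exponents of every large term of $f$ with probability at least $1-1/(4s\lb s)$. By \cref{cor:many-collisions} applied across the $\ceil{2\lb s}+3$ iterations with the $a_j$'s equal to the union of $\esuper$ and the medium exponents of $f$, every huge exponent of $f$ admits at least one iteration in which it collides with no other $a_j$ modulo $p$; this event also fails with probability at most $1/(4s\lb s)$. A union bound then gives overall failure probability at most $1/(2s\lb s)$.

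Conditional on these good events, I show two things. First, whenever the algorithm adds $cx^e$ to $\fs$, the change to the coefficient of $x^e$ in $f-\fs$ has magnitude less than $\sqrt{H}$. Since $\#E_{e\brem p}=1$ and $\esuper$ contains every large exponent of $f$, no other large-or-huge term of $f$ collides with $e$ modulo $p$. Hence the observed coefficient equals $c = c^* + \delta$, where $c^*$ is the true coefficient of $f-\fs$ at $x^e$ and $\delta$ aggregates contributions from medium-but-not-large and small terms of $f-\fs$ colliding with $e$. The small-term tail is at most $H^{7/30}$ (as in the proof of \cref{lem:errors}), while each medium-but-not-large coefficient has magnitude below $\tfrac{1}{2}H^{13/30}$, so $|\delta|<\sqrt{H}$ even allowing for the bounded medium-but-not-large contribution. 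Second, for every huge term $c_0 x^{e_0}$ of $f$, the good event guarantees an iteration in which $e_0$ has no medium-or-$\esuper$-collision modulo $p$; at that iteration the observed coefficient magnitude is at least $|c_0| - H^{7/30}\ge \tfrac{1}{2}H^{1/2}$, so the threshold is met and the unique $e\in\esuper$ coinciding with $e_0\brem p$ is $e_0$ itself, whence $c_0 x^{e_0}$ (perturbed by less than $\sqrt{H}$) enters $\fs$. Together: each coefficient of $f-\fs$ that was updated has residual of magnitude below $\sqrt{H}$, and each un-updated coefficient is non-huge, hence $\height{f-\fs}<\sqrt{H}$.

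The main obstacle is calibrating the collision bound so that it handles medium-but-not-large terms, which are not tracked by $\esuper$ yet still perturb $\delta$: one must verify that $\lambda = \max(21,3\#\esuper\lb D)$, combined with the hypothesis $\lb H\ge 15\lb s$ of \cref{lem:errors} (so that medium terms number at most $6s/\lb H$), is large enough to meet the requirement $\lambda\ge 3n\lb D$ of \cref{cor:many-collisions} for $n$ the combined medium-or-larger support size. A secondary subtlety, easily resolved, is that the collision analysis quantifies over the support of $f$ rather than $f-\fs$; but since $\fs$ only ever inserts terms at exponents drawn from $\esuper$, the medium-or-larger support of $f-\fs$ stays within (medium support of $f$)$\cup\,\esuper$ throughout, so working with $f$ suffices.
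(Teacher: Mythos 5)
Your overall structure (bit-length from the guard, cost dominated by the $O(\log s)$ \MBBimg{} calls with $p,k\in O(\#\esuper\log D)$, correctness conditioned on \SupSup{} succeeding plus a collision-avoidance event) matches the paper's, and the complexity accounting and your ``First'' claim are fine. The genuine gap is in your second good event. You invoke \cref{cor:many-collisions} with the $a_j$'s ranging over $\esuper$ \emph{together with all medium exponents of $f$}, so you need $\lambda\ge 3n\lb D$ with $n=\#(\esuper\cup\{\text{medium exponents}\})$. The algorithm only provides $\lambda=\max(21,\,3\,\#\esuper\,\lb D)$, and the ``obstacle'' you flag is not resolvable: $\#\esuper$ can be far smaller than the number of medium-but-not-large terms (e.g., one huge term plus $\Theta(s/\log H)$ medium terms none of which ever clears the $\tfrac12H^{13/30}$ threshold in \SupSup{}), in which case $\lambda$ is too small for the corollary to apply to your enlarged set. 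So the event you condition on is not shown to hold with the required probability.

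The fix is to drop the requirement of avoiding medium collisions altogether, which is what the paper does. Condition only on each large/huge exponent avoiding the \emph{other elements of $\esuper$} in some iteration (exactly what $\lambda=3\,\#\esuper\,\lb D$ supports), and bound the aggregate contribution of \emph{all} non-large colliding terms at once: there are at most $s$ of them, each of magnitude below $\tfrac12H^{13/30}$, and $s\le H^{1/15}$ by the hypothesis $\lb H\ge 15\lb s$, so their sum is below $\tfrac12H^{1/15+13/30}=\tfrac12\sqrt{H}$. This single bound does both jobs: a huge term's observed coefficient stays above $\sqrt{H}-\tfrac12\sqrt{H}=\tfrac12\sqrt{H}$, meeting the detection threshold, and the residual left at any updated exponent is below $\tfrac12\sqrt{H}$. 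Your own $\delta$-estimate in the ``First'' part essentially contains this computation, but you state it only as $\abs{\delta}<\sqrt{H}$, which is too weak to reuse in the ``Second'' part (it would give $\abs{c_0}-\abs{\delta}\ge 0$ rather than $\ge\tfrac12\sqrt{H}$) --- which is precisely why you were pushed into the stronger, unsupported collision event.
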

\begin{proof}
  Consider any set of non-large terms. By the upper bound on the total number of terms from \cref{lem:bitlen-lb}, the
  condition that $\lb H \ge 15\lb s$ from \cref{lem:errors}, and from \cref{def:coeff-sizes}, the sum of these non-large
  terms is at most
  \begin{equation}\label{eqn:medsum}
    \sum\abs{c_i} < \tfrac{1}{2}sH^{13/30} \le \tfrac{1}{2}\sqrt{H}.
  \end{equation}

  For the probabilistic correctness, suppose that (i)
  the call to \cref{algo:SupSup} \SupSup{} correctly returns a set
  $\esuper$ which contains the exponents of all large terms, and
  (ii) each large or huge term does not collide with any of the
  other exponents in $\esuper$ for at least one of the chosen $p$'s in
  the outer for loop. Taking a union bound with \cref{lem:SupSup} and
  \cref{cor:many-collisions}, both (i) and (ii) are true with probability
  at least $1-1/(2s\lb s)$, as required. We now prove the
  algorithm returns $f^*$ correctly under these two assumptions.

  Consider any huge term $c_0 x^{e_0}$ in $f$, i.e., with
  $\abs{c_0}\ge \sqrt{H}$.
  From \eqref{eqn:medsum}, the resulting collision with any number
  of non-large terms will still result in a coefficient larger than
  $\tfrac{1}{2}\sqrt{H}$ in absolute value, so this term will be added
  to $f^*$.
  Conversely, consider any term $c x^e$ which is added to $\fs$ in the
  algorithm. By assumption that $\esuper$ is correct,
  $c$ must be the sum of exactly one large (or huge) term plus some
  number of non-large coefficients. By \eqref{eqn:medsum} again, the
  coefficient of this term in $f-f^*$ is at most the size of the colliding
  non-large coefficients' sum, which is less than $\frac{1}{2}\sqrt{H}$.

  Because $f^*$ contains some term corresponding to each huge term in
  $f$, and every term added to $f^*$ reduces the height of that
  coefficient in $f-f^*$ below $\sqrt{H}$, we conclude that
  $\height{f-f^*}<\sqrt{H}$.
  For the complexity analysis,
  we have $\#\esuper \in \bigo{s\log s/\log H}$ from \cref{lem:SupSup},
  which means $p\in\bigo{s\log D\log s/\log H}$.
  As before, the cost of the evaluations $\MBBimg$ dominates the
  complexity, which comes from \cref{lem:MBBimg}.
\end{proof}

In most cases, \UintSlice{} will be called with a MDBB $\pi$ which is a sum of an actual unknown black-box polynomial
minus an explicit partial interpolant $\fs\in\zz[x]$ recovered so far, via \extoMDBB{} and \sumMDBB{}. As soon as
$\bitlenx{\fs} \leq s$, next corollary shows that the bit complexity is not affected.

\begin{cor}\label{cor:UintSliceExp}
  Let $\pi$ be any MDBB and $\fs\in\zz[x]$ such that
  $\bitlenx{\fs}\le s$. Then
  calling $\UintSlice(\pi^*,s,D,H)$ with
  $\pi^* = \sumMDBB(\pi, \extoMDBB(-\fs))$
  has the same asymptotic cost as $\UintSlice(\pi,s,D,H)$.
\end{cor}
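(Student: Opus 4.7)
The plan is to bound the overhead introduced by replacing $\pi$ with $\pi^{*}=\sumMDBB(\pi,\extoMDBB(-\fs))$ during every MDBB evaluation inside \UintSlice, and show that this overhead fits inside the bit-complexity bound already proved in \cref{lem:UintSlice}. I would structure the argument by tracking costs per MDBB call and then summing over the $\bigo{\log s}$ evaluations that \UintSlice (together with the \SupSup{} subcall) performs.

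First I would observe that running \UintSlice on $\pi^{*}$ produces exactly the same number of MDBB queries, with the same parameters $p,k\in\bigo{s\log D\log s/\log H}$ and $\log m\in\bigo{\log H}$, as running it on $\pi$; the algorithm's control flow is unaffected. So the number of calls to $\pi$ itself and the non-black-box work inside \UintSlice are unchanged. What must be controlled is the cost of evaluating $\pi^{*}$ once, which by \cref{lem:sumMDBB} equals the cost of one call to $\pi$, one call to the MDBB produced by $\extoMDBB(-\fs)$, and an additional $\bigo{k\log m}$ bit operations for combining the two length-$k$ evaluation vectors.

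Next I would estimate the cost of a single $\extoMDBB(-\fs)$ call using \cref{lem:extoMDBB} with $\bitlenx{-\fs}=\bitlenx{\fs}\le s$. Taking the first term inside the min, the cost is
\[
\bigo{p\log p\,\log m\,\llog m + s\,\llog m}
= \bigo{s\log D\,\log^{2}s\,\llog H},
\]
after substituting $p\in\bigo{s\log D\log s/\log H}$ and $\log m\in\bigo{\log H}$ (the $\log p$ factor contributes $\bigo{\log s}$ since $\log D\le\log s\cdot\log H$ is dominated under the hypotheses of \cref{lem:errors}). The $\sumMDBB$ overhead $\bigo{k\log m}$ is strictly lower order.

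Summing over the $\bigo{\log s}$ MDBB queries that \UintSlice makes yields a total overhead of $\bigo{s\log D\,\log^{3}s\,\llog H}$ bit operations, which matches the ``additional bit operations'' term already present in \cref{lem:UintSlice}. Therefore the overall bit complexity, and the MDBB call pattern on $\pi$, remain the same up to constants, proving the corollary. The only delicate step is verifying that the dense variant of \cref{lem:extoMDBB} (rather than the sparse one) is small enough at these parameter values; that is where careful substitution of the bounds on $p$ and $\log m$ is required, but no new idea beyond bookkeeping is needed.
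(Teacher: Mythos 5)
Your proposal is correct and follows essentially the same route as the paper's proof: bound the per-call overhead of $\extoMDBB(-\fs)$ plus \sumMDBB{} via \cref{lem:extoMDBB,lem:sumMDBB} under the parameter bounds $p,k\in\bigo{s\log D\log s/\log H}$ and $\log m\in\bigo{\log H}$, obtaining $\bigo{s\log D\log^2 s\llog H}$ per evaluation, then multiply by the $\bigo{\log s}$ MDBB calls and absorb the result into the $\bigo{s\log D\log^3 s\llog H}$ term of \cref{lem:UintSlice}. The only cosmetic issue is your justification that $\log p\in\bigo{\log s}$, which should be phrased via $\log D\le\tfrac{1}{6}\log H\le\tfrac{s}{6}$ rather than ``$\log D\le\log s\cdot\log H$,'' but the conclusion is right.
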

\begin{proof}
  From \cref{lem:extoMDBB} and \cref{lem:sumMDBB}, the cost of a single
  MDBB evaluation of $\pi^*$ with $p,\omega,m,k$ is a single call to $\pi$
  plus 
  \[\bigo{s\llog m + p\log p\log m\llog m}\]
  bit operations. When $k\in\bigo{p}$,
  $\log m\in\bigo{\log H}$, and
  $p\log m\in\bigo{s\log D\log s}$, this simplifies to
  \(\bigo{s\log D \log^2 s \llog H}\).

  Then from \cref{lem:UintSlice}, the total cost of calls to $\pi^*$ in \UintSlice{} is the same calls to $\pi$, plus an
  additional bit cost which is always bounded by the bit-cost of \UintSlice{} itself.
\end{proof}

\section{Unbalanced interpolation}\label{sec:interp}

Our overall algorithm calls \cref{algo:UintSlice} \UintSlice{} repeatedly starting with $H=s$, until the height
decreases such that the conditions of \cref{lem:errors} are no longer satisfied, and then switches to a \emph{balanced}
sparse interpolation algorithm for the base case. For that, we use Algorithm 2 \bcinterp{} from \cite{GGPR22a}, which
solves the problem in $\bigot{T\log D + T\log H}$ time. In the base case when \cref{lem:errors} no longer applies, we
have $\log H \in \bigo{\log D + \log s}$, so this complexity becomes simply $\bigot{s\log D}$ as required.

The only slight change from \cite{GGPR22a} is that $\bcinterp$ must take a MDBB rather than a usual MBB as the
black-box for unknown $f\in\zz[x]$. But the MDBB model already fits the case perfectly; the algorithm in \cite{GGPR22a}
is using the MBB only to evaluate $f$ at consecutive powers of a $p$-PRU, which is precisely what a MDBB does already.

All this discussion is summarized in the following claim, where the precise complexity bound comes from Theorem 3.18 in
\citep{armelle-phd}.

\begin{fact}\label{fact:bcinterp}
  Let $\pi$ be a MDBB for $f\in\zz[x]$, and bounds $D,T,H$ on (resp.) the degree, sparsity, and
  height of $f$. Write $n = T(\lb D + \lb H)$.
  Then $\bcinterp(\pi,D,T,H)$ is a Monte-Carlo algorithm that always produces a polynomial $\fs$ within the given bounds
  using $\bigo{\log T}$ calls to the MDBB and $\bigo{n\log^3 n \log^2 T (\llog n)^2}$ bit operations. 
  In each MDBB call, we have $p\in\bigo{n\log T}$ and $\log m \in \bigo{\log(DH)}$, and the sum of $k$ values over all
  $\bigo{\log T}$ calls is $\bigo{T}$. If the unknown $f$  satisfies the given bounds $D,T,H$, then with
  probability at least $\tfrac{2}{3}$, the returned polynomial $\fs$ equals $f$.
\end{fact}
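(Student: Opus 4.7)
The plan is to follow Algorithm~2 of \cite{issac22}, adapted to the MDBB interface. The backbone is a classical Prony-style sparse interpolator wrapped in an iterative doubling procedure over a sparsity guess. For $j=0,1,\ldots,\lceil\lb T\rceil$, set $T_j=2^j$ and attempt a one-shot reconstruction assuming $f$ has at most $T_j$ terms, followed by a probabilistic verification against a fresh random evaluation of $\pi$. The first successful verification halts the loop; since the true sparsity is at most $T$, this happens within $\bigo{\log T}$ stages, matching the claimed MDBB call count.

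For each one-shot attempt, I would draw a random prime $p_j\in(\lambda_j,2\lambda_j)$ with $\lambda_j=\Theta(T_j\lb D)$ so that by \cref{lem:one-collision} no two true exponents collide modulo $p_j$ with constant probability, and an evaluation modulus $m$ of bit-length $\bigo{\log(DH)}$ large enough to recover signed coefficients and coefficient--exponent products unambiguously. A single MDBB call $\pi(p_j,\omega,m,k_j)$ with $k_j=\Theta(T_j)$ returns the required $f$ and $xf'$ values at consecutive powers of a $p_j$-PRU $\omega$. Berlekamp--Massey on the $f$-values yields the $T_j$ distinct locators $\omega^{e_i\bmod p_j}$; a per-locator ratio of the $xf'$ value by the $f$ value equals $e_i$ modulo $m$, and since $m>D$ this recovers each $e_i$ exactly, while the $f$-values themselves provide the coefficients.

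For complexity, each stage's dominant work is structured linear algebra on $\bigo{T_j}$ operands of bit-length $\bigo{\log(DH)}$, together with a constant number of inverse transforms; summing geometrically over stages gives a total $\bigot{n}$ with the precise polylog factors matching the stated bound. The per-call parameters $p\in\bigo{n\log T}$ and $\log m\in\bigo{\log(DH)}$ follow directly from the construction, and the geometric doubling of $k_j=\Theta(T_j)$ produces $\sum_j k_j\in\bigo{T}$.

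The hard part is the probability analysis to hit the $\tfrac{2}{3}$ success rate. This requires composing (i) a collision-free event at the first stage with $T_j\ge T$, (ii) correctness of the verification step with high probability via independent random sampling, and (iii) absence of false-positive verifications at premature stages $T_j<T$ where the candidate $\fs$ might by chance match $f$ at the verification prime. \cite{issac22} handles this by running $\bigo{1}$ independent trials per stage and tuning the probability constants so that a union bound over the $\bigo{\log T}$ stages yields $\tfrac{2}{3}$. Tracking the exact polylog factors --- specifically the three factors of $\log n$ and the $(\llog n)^2$ --- is delicate, and this is where I would rely on the detailed accounting of Theorem~3.18 in \cite{armelle-phd} rather than reproduce it.
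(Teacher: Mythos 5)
Note first that the paper does not actually prove this statement: it is a \emph{Fact} imported wholesale from prior work, with the algorithm being Algorithm~2 of \cite{issac22} and the precise cost coming from Theorem~3.18 of \cite{armelle-phd}. The only new content in the surrounding discussion is the observation that the MDBB interface is compatible, because that algorithm queries its black box only at consecutive powers of a $p$-PRU and needs exactly the paired $f$/$xf'$ evaluations that an MDBB supplies. Your proposal does cover this compatibility point, but it then attempts to reconstruct the cited algorithm itself, and the reconstruction has a genuine gap.

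The gap is at the locator-extraction step. Berlekamp--Massey does not ``yield the $T_j$ distinct locators $\omega^{e_i\bmod p_j}$''; it yields the degree-$T_j$ generator polynomial, and one must still find its roots among $\{1,\omega,\ldots,\omega^{p_j-1}\}$. Done at the working precision $\log m\in\bigo{\log(DH)}$ --- e.g.\ by a Bluestein evaluation of the generator at all $p_j$ powers of $\omega$ --- this costs $\gOt(p_j\log m)=\gOt(T\log D\,(\log D+\log H))$ bit operations, exceeding the claimed $\gOt(n)$ bound by a factor of $\log D$, which is not polylogarithmic in $n$ (take $T=\bigo{1}$ and $D$ doubly exponential: the extra factor is then $\Theta(n)$). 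Avoiding exactly this overhead --- by locating the support at low precision, where a length-$p$ transform is affordable, and reserving full precision for the $\bigo{T}$ coefficient and exponent-ratio computations --- is the technical heart of the cited work; your sketch is silent on it, and the same concern applies to your unquantified ``constant number of inverse transforms'' if those have length $p_j$ at full precision. So what you defer to \cite{armelle-phd} is not merely ``the exact polylog factors'' but the step that makes the claimed complexity achievable. (Your upward-doubling-with-verification loop is also a different control structure from the cited algorithm, which takes $T$ as input and halves the number of unrecovered terms per round, but that difference is harmless for the stated parameters.)
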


As mentioned in \cite{GGPR22a}, one can decrease the failure probability to any $\epsilon>0$ by iterating the call at
least $48\ln\tfrac{1}{\epsilon}$ times and returning the majority result. We now present our main algorithm that
recovers an unknown polynomial with possibly unbalanced coefficients from a given MDBB and bounds only on the degree and
total bit-length.

\begin{algorithm}\caption{$\Uint(\pi,s,D)$}\label{algo:Uint}
\begin{algorithmic}[1]
  \Input{MDBB $\pi$ for unknown $f\in\zz[x]$,
    and bounds $s,D$ on (resp.)
    the bit-length and degree of $f$}
  \Output{$f^*\in\zz[x]$ such that $f=f^*$ w.h.p.}
  \State $H \gets 2^s$; \quad $\fs \gets 0$; \quad $\pi^* \gets \pi$
  \While{$H \ge \max\left(61,\ 15\lb s,\ 6\lb D\right)$}
    \State $f^* \gets \fs + \UintSlice(\pi^*, s, D, H)$
    \If{$\bitlenx{\fs} > s$}\ \Return 0 \EndIf
    \State $\pi^* \gets \sumMDBB(\pi,\extoMDBB(-\fs))$
    \State $H \gets \sqrt{H}$
  \EndWhile
  \State $R \gets$ empty list
  \For{$i = 1,2,\ldots, \lceil 48 \ln (2s) \rceil$}
    \State $\fss \gets \bcinterp(\pi^*, 2s/\lb s, D, H)$
    \State Append $\fss$ to $R$
  \EndFor
  \If{$R$ has a majority element $f^{**}$ and $\bitlenx{f^{**}}\le s$}
    \State\Return $\fss$
  \Else\ \Return $0$
  \EndIf
\end{algorithmic}\end{algorithm}

The precise cost estimate is given in the following theorem.

\begin{thm}\label{thm:Uint}
  \cref{algo:Uint} \Uint{} always returns a polynomial with bit-length
  at most $s$ and uses $\bigo{\log^2 s}$ MDBB calls plus an
  additional
  \[\bigo{s\log D \log^5 s (\llog s)^2}\]
  bit operations.
  Each MDBB call has $p\in \bigo{s\log D}$,
  $\log m \in \bigo{s}$, but $(p+k)\log m\in \bigo{s\log D \log s}$;
  furthermore, the sum of $k$ over all MDBB calls is $\bigo{s\log D\log s}$.
  If the unknown $f$ actually satisfies bit-length and degree bounds
  $s,D$, then with probability at least $1-1/s$, the returned $\fss$
  equals $f$.
\end{thm}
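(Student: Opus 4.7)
The plan is to analyze \Uint{} in two phases matching its structure: the \emph{unbalanced while loop}, which peels off successively smaller coefficient ``slices'' via \UintSlice{}, and the \emph{base case for loop}, which invokes the balanced interpolator \bcinterp{} with majority voting. Correctness and the size bound $\bitlenx{\fs}\le s$ are enforced directly by the explicit checks inside the loops; the real work is showing the height-reduction invariant, the base-case applicability, and then summing costs.

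First I would establish termination and the loop invariant. Starting from $H = 2^s$, each iteration replaces $H$ by $\sqrt{H}$, so $\lb H$ halves each round; after $\bigo{\log s}$ iterations, $H$ drops below the threshold $\max(61,\,15\lb s,\,6\lb D)$ required by \cref{lem:errors}, and the loop exits. The invariant I would carry is that at the top of each iteration, $\height{f-\fs}\le H$ with high probability. This is maintained because \UintSlice{} is called on the residual MDBB $\pi^* = \sumMDBB(\pi,\extoMDBB(-\fs))$, which evaluates $f-\fs$; by \cref{lem:UintSlice} the slice reduces the residual height to $\sqrt{H}$ with probability at least $1-1/(2s\lb s)$. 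A union bound over the $\bigo{\log s}$ iterations caps the total while-loop failure probability at $\bigo{1/s}$.

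Once the loop exits, $\lb H \in \bigo{\lb D + \lb s}$, and by \cref{lem:bitlen-lb} the sparsity of $f-\fs$ is at most $2s/\lb s$, so \cref{fact:bcinterp} applies with $T = 2s/\lb s$ and $n = T(\lb D + \lb H) \in \bigo{s\lb D/\lb s + s}$. A single call succeeds with probability $\ge 2/3$; the $\lceil 48\ln(2s)\rceil$ independent repetitions followed by majority voting, as outlined after \cref{fact:bcinterp}, drive the base-case failure below $1/(2s)$. Combined with the while-loop bound this yields the overall error probability $\le 1/s$.

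For complexity I would sum the per-iteration estimates. The while loop contributes $\bigo{\log s}$ slices $\times$ $\bigo{\log s}$ MDBB calls per slice $= \bigo{\log^2 s}$ MDBB calls, with bit cost $\bigo{\log s}\cdot\bigo{s\lb D\,\lb^3 s\,\llog H}$ absorbed into the base-case bound; \cref{cor:UintSliceExp} ensures the $\pi^*$ rebuild does not inflate this. The base case contributes $\bigo{\log s}$ calls to \bcinterp{}, each using $\bigo{\log T} = \bigo{\log s}$ MDBB calls and $\bigo{n\lb^3 n\,\lb^2 T\,(\llog n)^2}$ bit operations, which after substitution dominates at $\bigo{s\lb D\,\lb^5 s\,(\llog s)^2}$. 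The main obstacle will be verifying the \emph{uniform} MDBB-parameter bounds $p\in\bigo{s\lb D}$, $\lb m\in\bigo{s}$, $(p+k)\lb m\in\bigo{s\lb D\,\lb s}$, and $\sum k \in \bigo{s\lb D\,\lb s}$ across both phases: in the while loop, $p\in\bigo{s\lb D\,\lb s/\lb H}$ and $\lb m\in\bigo{\lb H}$ so their product is $\bigo{s\lb D\,\lb s}$ at every scale of $H$, and the $k$-sum telescopes against $1/\lb H$ as $H$ shrinks; the analogous bookkeeping for \cref{fact:bcinterp} closes out the base case.
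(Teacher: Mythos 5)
Your proposal is correct and follows essentially the same route as the paper's proof: the same split into the slice-peeling while loop and the \bcinterp{} base case, the same height-halving invariant with a union bound over the $\bigo{\log s}$ calls to \UintSlice{}, the same instantiation of \cref{fact:bcinterp} with $T=2s/\lb s$ and $\log H\in\bigo{\log D+\log s}$, and the same geometric-sum argument for $\sum k$ as $\log H$ decreases. No substantive differences to report.
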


\begin{proof}
  First observe that due to the checks in the algorithm, $\fs$ and %
  $\fss$ always have bit-length at most
  $s$ and degree at most $D$.

  For the bit complexity, notice that at the end of the first loop we have $\log H \in \bigo{\log D + \log s}$. And then
  because the bit-length $s$ could never be more than $D\lb H$, we see that in fact $\log s + \log H \in \bigo{\log D}$
  at this point in the algorithm. Applying \cref{fact:bcinterp}, the total bit-cost of the $\bigo{\log s}$ calls to
  \bcinterp{} dominates the (non-black box) bit complexity and gives the stated bound.

  The black box calls are dominated instead by the first loop, whose cost is verified by applying \cref{lem:UintSlice}
  and observing that, since $\log H$ is exponentially decreasing from $s$ down to at least $\log s$, the sum of $k$
  values in all calls to \cref{lem:UintSlice} is as claimed.%

  For probabilistic correctness, the union bound over all calls to
  \UintSlice{}, \cref{lem:UintSlice}
  gives at most a $\tfrac{1}{2s}$ chance that \emph{any}
  of those steps does not return the correct result.
  In the subsequent calls to \bcinterp{},
  setting $T=2s/\lb s$ is valid by \cref{lem:bitlen-lb}.
  From the preceding discussion, we have the same probability of failure
  of the majority vote approach at the end. Therefore the total failure
  probability is at most $1/s$ as claimed.
\end{proof}

Similarly to \cref{cor:UintSliceExp}, we first clarify that adding
an explicit polynomial to the black box $\pi$ does not affect the
bit complexity. This is clearly the case as the MDBB $\pi^*$ in the algorithm
in fact already includes a sum with an explicit polynomial whose
bit-length is at most $s$.
\begin{cor}\label{cor:UintExp}
  Let $\pi$ be any MDBB and $\fs\in\zz[x]$ such that
  $\bitlenx{\fs}\le s$. Then
  calling $\Uint(\pi^*,s,D)$ with
  $\pi^* = \sumMDBB(\pi, \extoMDBB(-\fs))$
  has the same asymptotic cost as $\Uint(\pi,s,D)$.
\end{cor}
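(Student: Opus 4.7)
The plan is to mirror the argument of \cref{cor:UintSliceExp}. Write $\pi^{**}:=\sumMDBB(\pi,\extoMDBB(-\fs))$ for the wrapped black box passed into \Uint{}. By \cref{lem:sumMDBB} and \cref{lem:extoMDBB}, each MDBB evaluation of $\pi^{**}$ with parameters $(p,\omega,m,k)$ reduces to one call to $\pi$ plus an additive overhead of
\[\bigo{s\llog m + p\log p\log m\llog m + k\log m}\]
bit operations, using $\bitlenx{-\fs}=\bitlenx{\fs}\le s$.

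Next, I would observe that this is precisely the form of the wrapped MDBBs that \Uint{} already constructs and manipulates internally. At each iteration of the while-loop, the algorithm itself sets $\pi^* \gets \sumMDBB(\pi,\extoMDBB(-g))$ for the current partial interpolant $g$ (maintained with $\bitlenx{g}\le s$) and passes this to \UintSlice{} and, after the loop, to \bcinterp{}. Running \Uint{} on $\pi^{**}$ instead of $\pi$ therefore just replaces each internal $-g$ by the combined $-g-\fs$, whose bit-length is still at most $2s$, so all parameter bounds and cost estimates in the proof of \cref{thm:Uint} go through up to constant factors.

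The only mildly delicate point, and what I would consider the main (modest) obstacle, is that in the early iterations of the while-loop $\log m$ can be as large as $\bigo{s}$, so a naive per-call-times-number-of-calls bound on the additive overhead looks uncomfortable. However, the aggregate parameter bounds recorded in \cref{thm:Uint} --- in particular $(p+k)\log m\in\bigo{s\log D\log s}$ per call together with the total-$k$ bound $\bigo{s\log D\log s}$ across all $\bigo{\log^2 s}$ MDBB calls --- are exactly what absorb this additional cost into the claimed $\bigot{s\log D}$ complexity, in the same way as in the verification carried out for \cref{cor:UintSliceExp}. The corollary then follows.
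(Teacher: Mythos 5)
Your argument is correct and matches the paper's (very terse) justification: the key observation in both is that \Uint{} already wraps $\pi$ internally via $\sumMDBB(\pi,\extoMDBB(-g))$ with $\bitlenx{g}\le s$, so an extra external summand of bit-length at most $s$ only changes constants, with the per-call overhead from \cref{lem:extoMDBB} and \cref{lem:sumMDBB} absorbed exactly as in \cref{cor:UintSliceExp}. Your additional care about the aggregate parameter bounds in \cref{thm:Uint} is a welcome elaboration of what the paper leaves implicit.
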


The following corollary gives the concrete cost to run \Uint{}
when the input is a modular black box (MBB). The stated bit complexity,
which provides the clearest comparison to prior works in the literature,
can be even more simply stated as $\bigot{(B+L)s\log D}$.

\begin{cor}\label{cor:uinterp-mbb}
  If $\pi$ is a MBB for unknown $f\in\zz[x]$
  with cost parameters $B,L$ as in \cref{def:mbb},
  and if $f$ has bit-length at most $s$ and degree at most $D$,
  then $\Uint(\MBBtoMDBB(\pi),s,D)$
  correctly return $f$ with probability at least $1-\tfrac{1}{s}$
  and  total bit cost 
  \[\bigo{\left(B + L\log^3 s + \log^4 s (\llog s)^2\right)s\log D\log s}\]
\end{cor}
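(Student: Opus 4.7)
The plan is to combine \cref{thm:Uint} (cost of \Uint{} in the MDBB model) with \cref{lem:MBBtoMDBB} (cost of simulating one MDBB call from an MBB), tracking both the number of black-box calls and the per-call parameter budget.

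First, correctness is inherited for free. \MBBtoMDBB{} is deterministic and, by \cref{lem:MBBtoMDBB}, produces exactly the MDBB evaluations its specification requires. So the probabilistic guarantee of $\Uint(\MBBtoMDBB(\pi),s,D)$ coincides with the $1 - 1/s$ bound of \cref{thm:Uint}.

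For the bit complexity I would split the cost into the black-box part and the non–black-box part. \cref{thm:Uint} bounds the additional (non–MDBB) work by $\bigo{s\log D\log^5 s (\llog s)^2}$, which after factoring out $s\log D\log s$ contributes the $\log^4 s (\llog s)^2$ term of the stated bound. For the black-box part, each MDBB call with parameter $k$ is simulated by \MBBtoMDBB{} in $\bigo{Bk + (L+1)k\log m\llog m}$ bit operations. I would then sum over the $\bigo{\log^2 s}$ MDBB calls. The $B$-dependent piece collapses immediately using the bound $\sum k \in \bigo{s\log D\log s}$ from \cref{thm:Uint}, giving $\bigo{B\,s\log D\log s}$.

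The main (mildly subtle) step is the $L$-dependent piece, where one must avoid multiplying $\log m\in\bigo{s}$ naively against $\sum k$. Instead I would exploit the per-call bound $(p+k)\log m\in\bigo{s\log D\log s}$ from \cref{thm:Uint}: for each individual call, $k\log m \le (p+k)\log m \in \bigo{s\log D\log s}$, and summing over the $\bigo{\log^2 s}$ calls yields $\sum k\log m \in \bigo{s\log D\log^3 s}$. Combined with the observation that $\log m\in\bigo{s}$ implies $\llog m\in\bigo{\log s}$, the $L$-dependent cost becomes $\bigo{L\,s\log D\log^4 s}$, which after factoring out $s\log D\log s$ matches the $L\log^3 s$ term. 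Adding the three contributions gives the stated bound $\bigo{(B + L\log^3 s + \log^4 s(\llog s)^2)\,s\log D\log s}$. The only real pitfall is the second summation; everything else is routine bookkeeping from \cref{thm:Uint} and \cref{lem:MBBtoMDBB}.
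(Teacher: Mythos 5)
Your proposal is correct and follows exactly the route the paper intends: the paper's own proof is the one-liner ``follows directly from \cref{thm:Uint} and \cref{lem:MBBtoMDBB},'' and your accounting (using $\sum k$ for the $B$-term, the per-call bound $(p+k)\log m\in\bigo{s\log D\log s}$ together with $\llog m\in\bigo{\log s}$ for the $L$-term, and the residual bit cost from \cref{thm:Uint} for the last term) is precisely the bookkeeping that one-liner elides.
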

\begin{proof}
  Follows directly from \cref{thm:Uint} and \cref{lem:MBBtoMDBB}.
\end{proof}

\section{Unbalanced multiplication}\label{sec:mul}
In this section we will show how to multiply two polynomials
$f,g\in\zz[x]$ with degree less than $D$ and bit-lengths $\ell$,
in time $\bigot{s\log D}$, where
$s=2\ell + \bitlenx{fg}$ is the total input and output bit-length.

As motivated in the introduction, this is the first algorithm which is sub-quadratic in $s$ for unbalanced coefficients,
\emph{even for dense polynomials}. For sparse integer polynomial multiplication, the state of the art comes from
\cite[Algorithm 20]{armelle-phd}, based on the (balanced) sparse interpolation of \cite{GGPR22a}, having a bit complexity
$\bigot{t\log H + t\log D}$, which is $\gOt(s^2+s \log D)$ in terms of the total bit-length $s$. For the remainder we
address only the case of sparse polynomials as the dense case works in exactly the same way.

Our general approach is to construct a MDBB for the product $fg$ and then use \cref{algo:Uint} \Uint{} to interpolate it
with complexity dependent on (unbalanced) total bit-length of the inputs and output. But we will show that the
pessimistic bound on $\bitlenx{fg}$ is quadratic in the input sizes, which means that a na\"ive use of \Uint{} would not
be quasi-linear in the actual size of $fg$.

Instead, we develop an efficient \emph{probabilistic} verification method,
which allows us to try interpolating $fg$ with smaller bounds on the
output bit-length, then repeatedly doubling the \emph{optimistic} bound
until the verification passes.

\paragraph{{Unbalanced product bit-length}}
Suppose $f,g\in\zz[x]$ have $t$ non-zero terms and total bit-length
$\ell$.
The number of terms in $h=fg$ is at most $t^2$ but the total bit-length of $h$ remains $\bigo{t\ell}$, instead of $\bigo{t^2 (\ell + \log t)}$.
Indeed, when no collision occurs, every entry of $f$ and $g$ will contribute exactly $t$ times to the bit length of $h$.
Collision among non-zero terms can only decrease the sum of the bit-length of the $t^2$ coefficient products. Hence the
bit-length of $h$ is bounded by $t$ times the bit-length of the input. Using Lemma~\ref{lem:bitlen-lb} we can derive
the following lemma.

\begin{lem} \label{cor:bitlenprod}
If $\bitlenx{f},\bitlenx{g}\le\ell$, then
$\bitlenx{fg}\le 4\ell^2/\log \ell$
and $\height{fg}\le 4^\ell\cdot{}\ell$
\end{lem}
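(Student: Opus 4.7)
The plan is to formalize the two-paragraph sketch already given before the lemma statement, and then combine it with the sparsity bound from \cref{lem:bitlen-lb} to extract the explicit constants.

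First I would set $s_1 = \bitlenx{f}$, $s_2 = \bitlenx{g}$, and $T = \max(\#f, \#g)$. The key observation is that when one expands $fg = \sum_{i,j} c_{f,i} c_{g,j}\, x^{e_{f,i}+e_{g,j}}$, the $\#f\cdot\#g$ ``formal'' terms (before collecting like exponents) give a valid upper bound on $\bitlenx{fg}$: collisions fuse several formal terms into one, and the bit-length of a sum of integers is always at most the sum of their bit-lengths, while each distinct exponent is counted only once (not per colliding pair). So
\[
  \bitlenx{fg} \;\le\; \sum_{i,j}\bigl(\bitlen{c_{f,i}c_{g,j}} + \bitlenp{e_{f,i}+e_{g,j}}\bigr).
\]
Using $\bitlen{ab}\le\bitlen{a}+\bitlen{b}$ and $\bitlenp{u+v}\le\bitlenp{u}+\bitlenp{v}$, the right-hand side expands into four double sums that each factor as (sparsity)$\times$(single-polynomial bit-length), yielding
\[
  \bitlenx{fg} \;\le\; \#g\cdot s_1 + \#f\cdot s_2 \;\le\; T\,s.
\]

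Second, I would apply \cref{lem:bitlen-lb} to each factor: since $s_1,s_2\le s$ and $x/\log_2 x$ is increasing, $T < 2s/\log_2 s$. Substituting gives $\bitlenx{fg}\le 2s^2/\log_2 s$, and absorbing the constant-factor slack from the bit-length inequalities above yields the stated $4s^2/\log s$.

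For the height bound, I would note that every coefficient of $fg$ is a sum of at most $T$ terms of the form $c_{f,i} c_{g,j}$, each of magnitude at most $H_f H_g$. Since $\log_2 H_f \le s_1$ and $\log_2 H_g \le s_2$, we have $H_f H_g \le 2^s$, and therefore $\height{fg}\le T\cdot 2^s$. Combined with $T<2s/\log_2 s \le s$ for $s\ge 2$, and the loose replacement $2^s\le 4^s$, this gives $\height{fg}\le s\cdot 4^s$ as required.

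No step is really an obstacle here — the argument is essentially a careful bookkeeping of the informal sketch preceding the lemma. The only mild care needed is in the height computation, to check that the bit-length bound $s$ legitimately bounds $\log H_f+\log H_g$ (since each $\bitlen{c_{f,i}}$ is a summand of $\bitlenx{f}$, in particular for the maximum-magnitude coefficient), and that the generous factor of $4^s$ rather than $2^s$ in the statement leaves comfortable room for the $T$ factor.
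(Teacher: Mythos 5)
Your proposal is correct and follows essentially the same route as the paper, which justifies this lemma only by the informal paragraph preceding it (bounding the $t^2$ formal coefficient products and noting that collisions can only decrease the total bit-length); your version simply makes the subadditivity of $\bitlen{\cdot}$ under sums and products explicit and plugs in the sparsity bound of \cref{lem:bitlen-lb}, exactly as intended. The only caveats are cosmetic: the lemma's $\ell$ and $s$ should be read as the same quantity (as your proof assumes), and your claim $2s/\log_2 s\le s$ fails for $s\in\{2,3\}$, but the slack in $4^s$ absorbs this.
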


\subsection{Unbalanced product verification}

Our unbalanced interpolation algorithm \Uint{} takes a bit-length bound
and always returns a polynomial with that size, but gives (obviously) no
correctness guarantee if that bound was too small.
To avoid relying on the pessimistic upper bound on $\bitlenx{fg}$, we
provide an efficient verification method.
While deterministic verification seems to be a difficult task,
a straightforward analysis of the probabilistic algorithm
proposed in \cite[Section 5.2]{GiorgiGrenetPerretduCray2023} is already satisfactory
to reach the needed complexity:

\begin{lem}\label{lem:verifdense}
  There is a one-sided error randomized algorithm, called \VerifProd($f,g,h,\epsilon$), that, given 
  $f$, $g$, $h\in\zz[x]$ with degrees and bit-lengths at most $D$ and $s$,
  and failure probability $\epsilon\in(0,1)$,
  tests whether $h = f\times
  g$ with false-positive probability at most $\epsilon$, using $\gO(s\log \frac{s}{\epsilon}\llog \frac{s}{\epsilon}
    + s\log \frac{D}{\epsilon} \llog \frac{D}{\epsilon})$ bit operations.
\end{lem}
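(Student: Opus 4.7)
The plan is to apply the randomized product verifier of \cite[Section 5.2]{GiorgiGrenetPerretduCray2023} essentially verbatim and to re-derive its bit-complexity in terms of the unbalanced bit-length $s$. That procedure picks a random prime $p$ of bit-length $\bigo{\log(D/\epsilon)}$ and a random prime $q$ of bit-length $\bigo{\log(s/\epsilon)}$, reduces $f$, $g$, and $h$ modulo $\langle x^p-1,q\rangle$, evaluates the three resulting sparse images at a random $r\in\mathbb{F}_q$, and accepts iff $\tilde f(r)\,\tilde g(r)\equiv\tilde h(r)\pmod q$.

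For correctness, if $fg=h$ the verifier always accepts, so one only needs to bound the one-sided false-positive probability. This comes from a union bound over three events, each arranged to have probability at most $\epsilon/3$: first, that $q$ divides some fixed nonzero coefficient of $fg-h$, handled by the density of primes in the chosen interval together with the bound $\height{fg-h}\le 2^{\gO(s)}$ inherited from \cref{cor:bitlenprod} (so any such coefficient has only $\gO(s)$ prime divisors); second, that a random prime $p$ collapses the nonzero support of $fg-h$ modulo $x^p-1$, handled by a prime-density argument in the style of \cref{lem:one-collision} applied to the $\le 2D+1$ exponent differences; and third, that $r$ is a root of the surviving nonzero polynomial of degree $<p$ in $\mathbb{F}_q[x]$, handled by Schwartz--Zippel. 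These three tunings dictate the bit-lengths of $p$ and $q$ stated above.

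On complexity, the three reductions cost $\gO\!\left(s(\llog p+\llog q)\right)$ bit operations in total by \cref{lem:reduce}, which is already dominated by the claimed bound. The remaining work is the single-point evaluation of each reduced sparse polynomial at $r\in\mathbb{F}_q$: combining the bit-length bound on the term count from \cref{lem:bitlen-lb} with the cost of one modular exponentiation per term (exponents now lying in $[0,p)$) and a modular multiplication by the reduced coefficient, the two principal summands in the stated bound emerge after substituting $\log p=\bigo{\log(D/\epsilon)}$ and $\log q=\bigo{\log(s/\epsilon)}$ and using that the total bit-length budget $s$ is partitioned between exponent bits and coefficient bits, which separates cleanly into the ``$\log(D/\epsilon)\llog(D/\epsilon)$'' part (exponentiation costs) and the ``$\log(s/\epsilon)\llog(s/\epsilon)$'' part (coefficient-arithmetic costs). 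The main obstacle I anticipate is precisely this separation: one must argue that the unbalanced bit-length assigns each individual term a fair share of the total budget so that the per-term exponentiation and multiplication costs do not multiply into a single $\log(D/\epsilon)\log(s/\epsilon)$ factor, but instead sum to the two displayed terms. Once this accounting is in place, the bound follows directly from \cref{lem:reduce} and the standard cost of modular exponentiation.
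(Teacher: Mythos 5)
Your outline coincides with the paper's: it too draws random primes $p,q$, considers $\delta=(h-fg)\bmod\langle x^p-1,q\rangle$, evaluates at a random point of $\fq$, and controls the false-positive probability by a union bound over the same three events, deferring details to \cite[Section~5.2]{GiorgiGrenetPerretduCray2023}. But two genuine gaps remain in your version. The first concerns one-sidedness: evaluation at an arbitrary $r\in\fq$ does not commute with reduction modulo $x^p-1$. Writing $\tilde f,\tilde g,\tilde h$ for the reduced images, $\tilde f(r)\tilde g(r)=(\tilde f\tilde g)(r)$ differs from $\widetilde{fg}(r)=\tilde h(r)$ by the value of a multiple of $x^p-1$; with $f=g=x$, $h=x^2$, $p=2$, your test rejects the correct product for every $r\ne\pm1$. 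The cited procedure takes $q\equiv 1\pmod p$ and $r$ a $p$-th root of unity in $\fq$, so that $r^p=1$ and evaluation factors through $\zz[x]/\langle x^p-1,q\rangle$; this constraint on $q$ must be added and folded into your prime-density argument for the first event.

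The second gap is the complexity obstacle you yourself flag, and your proposed resolution cannot close it. Once exponents are reduced modulo $p$, every one of them occupies $\Theta(\log p)$ bits regardless of how few bits it contributed to $s$, so there is no ``fair share'' of the budget to distribute: one exponentiation per reduced term costs $\Theta(T\log p)$ multiplications modulo $q$ with $T$ as large as $\Theta(s/\log s)$, and the resulting product $\log\tfrac{D}{\epsilon}\cdot\log\tfrac{s}{\epsilon}$ is not bounded by the claimed sum when $\epsilon$ is small. The fix --- which is what the paper's appeal to \cite[Theorem~3.1]{GiorgiGrenetPerretduCray2023} supplies --- is to exponentiate by the \emph{original} exponents $e_i$, legitimate precisely because $r^p=1$: square-and-multiply then uses $O(\bitlenp{e_i}+1)$ multiplications per term, hence $O\big(\sum_i\bitlenp{e_i}\big)=O(s)$ multiplications modulo $q$ in total, which combined with $\log q=O(\log\tfrac{s}{\epsilon}+\log\tfrac{D}{\epsilon})$ and the $O(s\llog q)$ cost of reducing coefficients (\cref{lem:reduce}) yields the stated bound. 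Note this repair is only available after the root-of-unity correction above, so the two gaps are linked.
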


\begin{proof}
  The approach is exactly the same as in \cite[Section 5.2]{GiorgiGrenetPerretduCray2023}.
  We choose random primes $p,q$, evaluate
  $\delta = (h-fg)\bmod\langle x^p-1, q\rangle$
  at a random point $\alpha\in\fq$, and check whether the result is
  zero.
  If $h=fg$, then $\delta=0$ and the check always succeeds.
  Otherwise, from
  \cref{cor:bitlenprod} and \cref{lem:reduce} one may show that taking $p=\Omega(\frac{1}{\epsilon}\max(s,D))$ and $q=
  \gO(\frac{1}{\epsilon}s)$ is enough to guarantee that $\delta$ is zero with a probability less than
  $\frac{2}{3}\epsilon$. Using \cite[Theorem 3.1]{GiorgiGrenetPerretduCray2023} one can show that evaluating
  $\delta(\alpha)=0$ costs the claimed complexity and that $\alpha$ is a root of $\delta$ with probability less than
  $\frac{1}{3}\epsilon$.
\end{proof}

\subsection{Adaptative polynomial multiplication}

One can easily construct an MDBB $\pi$ for the product of two polynomials $f,g\in \zz[x]$ by composing the procedures
\prodMDBB and \extoMDBB given in \cref{ssec:mdbb}. Feeding this MDBB into our unbalanced interpolation algorithm \Uint{}
with a correct bit-length bound yields a Monte-Carlo algorithm for computing the product $fg$. {We now provide a
  \emph{probably correct and probably fast} randomized algorithm (sometime called Atlantic-City) for that computation
  where the correct bit-length is discovered through our probabilistic verification of the computed result.}

\begin{algorithm}
  \caption{\UnbalancedProd}\label{algo:UnbalancedProd}
\begin{algorithmic}[1]
  \Input{$f,g \in\zz[x]$ of degree at most $D$ and bit-lengths $\ell$;}
  \Output{$h\in\zz[x]$ such that $h=f\times g$ w.h.p.}
  \State $h \gets 0$; $s \gets \ell$
  \State $s_{max}\gets 2\ell+ 4\ell^2/\log_2(\ell)$;
  $\epsilon \gets (s_{max} (8\ell+4))^{-1} $ 
  \State  $\pi \gets \prodMDBB(\extoMDBB(f), \extoMDBB(g))$
  \While{$s<2s_{max}$ and {\bf not} \VerifProd($f,g,h,\epsilon$)}
  \State $h \gets$ \Uint($\pi, s, 2D$)
  \State $s \gets 2s$
  \EndWhile
  \State\Return $h$
\end{algorithmic}
\end{algorithm}

\begin{thm}\label{thm:unbalancedproduct}
  Let two polynomials $f,g\in\zz[x]$ of degree at most $D$ and bit-lengths $\ell$. Algorithm \UnbalancedProd is an
  Atlantic-City algorithm that returns the polynomial $h=f\times g $ with probability at least $1-\frac{1}{s}$, using an
  expected total of $\gO(s\log D\log^5s(\log\log s)^2)$ bit operations where $s=\gO(\ell+\bitlenx{h})$. This is the
  actual complexity with probability at least $1-\frac{1}{s}$. In the worst-case it requires $\gO(s^2\log
  D\log^4s(\log\log s)^2)$ bit operations.
\end{thm}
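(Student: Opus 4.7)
The plan is to establish the three claims of \cref{thm:unbalancedproduct}---probabilistic correctness, expected complexity $\gOt(s\log D)$ with $s=\gO(\ell+\bitlenx{fg})$, and worst-case complexity $\gOt(s^2\log D)$---in sequence. The structural idea is straightforward: the MDBB $\pi$ for $fg$ is built once by chaining \extoMDBB with \prodMDBB, and then the loop discovers the correct output bit-length via doubling, with \VerifProd as the termination oracle. The parameter $s_{max}$ from \cref{cor:bitlenprod} provides a safety net that forces termination within $\gO(\log\ell)$ iterations regardless of the probabilistic behavior of \Uint and \VerifProd.

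For probabilistic correctness, let $s^*=\bitlenx{fg}$; once the loop bound $s_i=2^i\ell$ first reaches $s^*$, \cref{thm:Uint} guarantees $\Uint(\pi,s_i,2D)$ returns $fg$ exactly with probability $\ge 1-1/s_i$, after which \VerifProd certainly accepts and the loop exits. The algorithm can therefore return a wrong answer only if (i) some earlier verification falsely accepts an incorrect $h$, or (ii) the one good \Uint call itself fails. The choice $\epsilon=(s_{max}(8\ell+4))^{-1}$, together with a union bound over the $\gO(\log s_{max})$ iterations using \cref{lem:verifdense}, keeps the total false-positive probability below $1/(2s)$; combined with the $\le 1/(2s)$ failure probability in case (ii), the total failure probability is $\le 1/s$.

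For the expected and high-probability complexity, each iteration's cost is dominated by an \Uint call at bound $s_i$, contributing $\gO(s_i\log D\log^5 s_i(\log\log s_i)^2)$ by \cref{thm:Uint}, plus a \VerifProd call of matching asymptotic cost (via \cref{lem:verifdense}, noting $\log(1/\epsilon)=\gO(\log s)$). With probability $\ge 1-1/s$ the loop terminates at the first $i$ with $s_i\ge s^*$, so the geometric sum of iteration costs is dominated by its last term at $s_i=\gO(\ell+s^*)=\gO(s)$, yielding the claimed bit complexity as the high-probability running time. For the expected bound, the low-probability bad runs contribute at most $(1/s)$ times the worst-case complexity established below, which is itself $\gOt(s^2\log D)$---hence $\gOt(s\log D)$ extra in expectation, absorbed into the stated bound.

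The worst-case analysis is the main subtlety and is where one logarithmic factor drops. The loop certainly terminates once $s\ge 2s_{max}$, so the dominant iteration has $s_i=\gO(s_{max})$. By \cref{cor:bitlenprod}, $s_{max}=\gO(\ell^2/\log\ell)$; since $s\ge\ell$, this gives $s_{max}=\gO(s^2/\log s)$ and $\log s_{max}=\gO(\log s)$. Substituting into $\gO(s_{max}\log D\log^5 s_{max}(\log\log s_{max})^2)$ absorbs exactly one factor of $\log s$ to yield the claimed $\gO(s^2\log D\log^4 s(\log\log s)^2)$. The main technical obstacle is thus coordinating the failure probabilities across an \emph{adaptive} number of iterations while keeping the verification cost per iteration comparable to \Uint's, but the algorithm's specific choices of $\epsilon$ and $s_{max}$ are calibrated precisely to make both constraints close cleanly.
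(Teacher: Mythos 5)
Your proposal follows the same route as the paper's proof: correctness via a union bound over the at most $8\ell+4$ iterations with the calibrated $\epsilon$ plus the failure probability of the one guaranteed-good \Uint{} call; running time dominated by the last iteration of the doubling loop; and a separate accounting of the low-probability extra iterations. Your worst-case derivation ($s_{max}=\gO(s^2/\log s)$ absorbing one $\log s$ factor) is correct and in fact more explicit than the paper's. Your handling of the extra iterations is coarser than the paper's but still valid: the paper bounds the expected extra cost by the geometric sum $\gO\bigl(\sum_{i\ge 1}2^iC(s_h)(1/s_h)^i\bigr)=\gO(C(s_h))$, exploiting that $i$ extra iterations require $i$ consecutive failures of \Uint{}; you instead charge $\Pr[\text{bad run}]$ times the worst case, i.e.\ $\gO(1/s)\cdot\gO(s^2\log D\log^4 s(\llog s)^2)$, which happens to fit inside the claimed expected bound. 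Both work; the paper's version is tighter and independent of the worst-case bound.

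One genuine (if small) omission: you charge each iteration only the bit cost from \cref{thm:Uint}, but that theorem bounds the \emph{number} of MDBB calls and their parameters, not the cost of evaluating the particular MDBB $\pi=\prodMDBB(\extoMDBB(f),\extoMDBB(g))$, which is not free. Each evaluation of $\pi$ costs, by \cref{lem:extoMDBB} and \cref{lem:sumMDBB}, roughly $\gO(\ell\llog m+p\log p\log m\llog m)$ bit operations, and one must verify that, summed over all MDBB calls made by \Uint{} with the parameter ranges of \cref{thm:Uint}, this does not dominate the stated complexity; this is precisely the role of \cref{cor:UintExp} (and \cref{cor:UintSliceExp}) in the paper. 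Without that step the per-iteration cost bound, and hence the theorem's complexity claim, is not fully justified.
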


\begin{proof}
  First one should note that according to \cref{cor:bitlenprod}, $s_{max}$ bounds
  $\bitlenx{fg}+\bitlenx{f}+\bitlenx{g}$. So taking $s$ at most $2s_{max}-1$ is enough to guarantee at least one
  iteration is done with a correct bit-length. This iteration will fail with the same probability as \Uint, which is
  $<1/s$. Let us assume that \Uinterpolate never produces a correct answer even when $s$ correctly bounds the bit-length
  of $fg$. Since the algorithm requires at most $\lceil \lb (2s_{max}/l) \rceil < 8\ell+4$ iterations before
  terminating, the probability that at least one verification test fails is $\le (8\ell+2)\epsilon = 1/s_{max}$. Hence,
  the probability of a wrong answer is $< 1 /s_{max} $.

  For the running time we note that all returned polynomials %
  have a bit-length smaller than $s$ even
  for wrong results. Therefore, each iteration of \Uint has a cost of $\gO(s\log D\log^5s(\log\log s)^2)$ bit
  operations, plus the MDBB evaluations.

  Evaluating the MDBB $\pi$ entails one \prodMDBB{} and two
  \extoMDBB{}s on polynomials whose bit-length is at most $\ell$.
  From \cref{lem:extoMDBB} and \cref{lem:sumMDBB} this is dominated by the cost
  of \extoMDBB{}, and then by \cref{cor:UintExp} the cost of all MDBB evaluations
  does not dominate the bit-cost of \Uint{} itself.

  The bit cost from \Uint{} dominates the complexity of the corresponding call
  to \VerifProd because
  $\log\frac{1}{\epsilon}=\bigo{\log \ell} = \gO(\log s)$.
  Since the value of $s$ is doubling at each iteration, the global running time is dominated by the cost
  of the last one, which is done with $s=\bigo{\bitlenx{fg}}$.

  It remains to see if there are \emph{extra} iterations after reaching a bound $s_h\geq \bitlenx{fg}$. There are
  $\gamma < 8\ell+4$ extra iterations if both \Uinterpolate and \VerifProd fail $\gamma$ times in a row. Even ignoring
  the probability of failure of \VerifProd, this cannot happen with probability more than $(\frac{1}{s_h})^\gamma$.
  Let $C(s)$ be the cost of one iteration. Since it is quasi-linear in $s$, the $i$-th extra iteration costs $\gO(2^i
  C(s_h))$. Hence the expected cost is $\gO\left(\sum_{i=1}^{\gamma} 2^{i}C(s_h)(\frac{1}{s_h})^i\right) =
  \gO(C(s_h))$.
\end{proof}

We remark that one might hope to simplify this approach by using
early termination instead of (probabilistic) verification, stopping the
loop as soon as several interpolations in a row lead to the same
polynomial. But without any guarantees on the outputs of \Uint{} when
given a too-small bit-length bound $s$, there is no way to analyze the
early termination strategy.

\section{Open question}\label{sec:supersparse}

A natural question, which we leave open, is whether a soft-optimal
algorithm for integer polynomial interpolation with unbalanced
coefficients \emph{and large, unbalanced exponents} is possible.
That is, we have shown an
algorithm that runs in $\bigot{s\log D}$ time; is even better
$\bigot{s}$ possible?
We will outline two possible approaches towards this improvement, and
briefly explain why both do not seem to work with our current
techniques.

At a high level, our algorithm \Uint{} and its subroutine \UintSlice{} is \emph{top-down} in nature: it first retrieves
only the largest terms by making relatively few evaluations at very high precision. Then these are added to the result,
and we proceed to find more terms at the next level, using a greater number of lower-precision evaluations.

The complexity challenge comes at the point of performing evaluations; in particular we need some $p = \bigot{k\log D}$
evaluations at each step in order to retrieve the $k$ largest terms. In prior work such as \cite{GGPR22a}, this extra
$\log D$ factor in evaluations is avoided by using Prony's method as in \citet{BenOrTiwari}, but that doesn't work in
this context because \emph{we do not actually have a $k$-sparse polynomial}! Instead we have a polynomial with many more
than $k$ terms, from which we only want to extract the $k$ largest coefficients. So the Prony method (in exact
arithmetic) cannot be used, and we have to resort to a dense ``over-sampling'' approach and incur the extra $\log D$
factor in cost.
Note that the numerical sparse FFT and related methods \citep{sparse-fft-soda14,sfft-music,Plonka2018} do tackle this
problem of retrieving only the largest coefficients via evaluation/interpolation. But numerical evaluation unfortunately
does not fit with our update step to cancel out the large terms once recovered.
 
A completely opposite approach to ours would be to instead start by recovering all of the \emph{small} terms of the
unknown $f$, then cancel these and iterate, at each step retrieving fewer terms, at higher size and evaluation
precision. This \emph{bottom-up} approach seems at first glance to work very well; in particular it solves the
aforementioned issue of sparsity because, at the point of attempting to recover some $k$ large terms, the difference
polynomial $f-f^*$ really is $k$-sparse, and only $O(k)$ evaluations would be needed at each step.

But unfortunately there is a subtle and seemingly devastating obstacle
to this approach, which is the cost of evaluating the already-recovered
terms $f^*$ at each step. For example, consider an extreme case where
$f$ has roughly $s/\log s$ very small terms with $O(\log s)$ bits each,
and then a constant number of very large terms with $O(s)$ bits each.
After recovering all the small terms, the explicit polynomial $f^*$ has
sparsity $\bigo{s}$ (but small height), and now to find the remaining
few large terms, we need to evaluate this polynomial to very high
precision, that is, with a $s$-bit modulus.

But to our knowledge there
is no known technique to evaluate a large low-height polynomial at just
a few points to very high precision. Instead, the standard approach to
evaluate $f^*(\omega)\bmod m$ with an $s$-bit modulus $m$ would require
first computing $\omega^{e_i}\bmod m$ for each exponent $e_i$ in $f^*$,
which already results in a bit-cost of $\bigot{s^2}$, obviously not
quasi-linear time.

Of course there could be an entirely different approach to achieving
$\bigot{s}$ runtime, but we thought it would be prudent to share these
two (failed) attempts of ours so far, in the hopes of provoking new
ideas and future work.

\section*{Acknowledgement}

  We are grateful to the reviewers for their insightful comments. This work was supported in part by the French National
  Research Agency with the grants: 22-PECY-0010; 22-PECY-003 and ANR-21-CE39-0006.


\newcommand{\Gathen}{\relax}\newcommand{\Hoeven}{\relax}


\begin{thebibliography}{70}


\ifx \showCODEN    \undefined \def \showCODEN     #1{\unskip}     \fi
\ifx \showDOI      \undefined \def \showDOI       #1{#1}\fi
\ifx \showISBNx    \undefined \def \showISBNx     #1{\unskip}     \fi
\ifx \showISBNxiii \undefined \def \showISBNxiii  #1{\unskip}     \fi
\ifx \showISSN     \undefined \def \showISSN      #1{\unskip}     \fi
\ifx \showLCCN     \undefined \def \showLCCN      #1{\unskip}     \fi
\ifx \shownote     \undefined \def \shownote      #1{#1}          \fi
\ifx \showarticletitle \undefined \def \showarticletitle #1{#1}   \fi
\ifx \showURL      \undefined \def \showURL       {\relax}        \fi
\providecommand\bibfield[2]{#2}
\providecommand\bibinfo[2]{#2}
\providecommand\natexlab[1]{#1}
\providecommand\showeprint[2][]{arXiv:#2}

\bibitem[Alon and Mansour(1995)]%
        {AlonMansour1995}
\bibfield{author}{\bibinfo{person}{Noga Alon} {and} \bibinfo{person}{Yishay
  Mansour}.} \bibinfo{year}{1995}\natexlab{}.
\newblock \showarticletitle{{$\epsilon$}-Discrepancy Sets and Their Application
  for Interpolation of Sparse Polynomials}.
\newblock \bibinfo{journal}{\emph{Inform. Process. Lett.}}
  \bibinfo{volume}{54}, \bibinfo{number}{6} (\bibinfo{year}{1995}),
  \bibinfo{pages}{337--342}.
\newblock
\urldef\tempurl%
\url{https://doi.org/10.1016/0020-0190(95)00032-8}
\showDOI{\tempurl}


\bibitem[Arnold(2016)]%
        {Arnold2016}
\bibfield{author}{\bibinfo{person}{Andrew Arnold}.}
  \bibinfo{year}{2016}\natexlab{}.
\newblock \emph{\bibinfo{title}{Sparse polynomial interpolation and testing}}.
\newblock \bibinfo{thesistype}{Ph.\,D. Dissertation}.
  \bibinfo{school}{University of Waterloo}.
\newblock


\bibitem[Arnold et~al\mbox{.}(2014)]%
        {ArGiRo14}
\bibfield{author}{\bibinfo{person}{Andrew Arnold}, \bibinfo{person}{Mark
  Giesbrecht}, {and} \bibinfo{person}{Daniel~S. Roche}.}
  \bibinfo{year}{2014}\natexlab{}.
\newblock \showarticletitle{Sparse interpolation over finite fields via
  low-order roots of unity}. In \bibinfo{booktitle}{\emph{Proceedings of the
  2014 international symposium on symbolic and algebraic computation}}
  \emph{(\bibinfo{series}{ISSAC'14})}. \bibinfo{pages}{27--34}.
\newblock
\urldef\tempurl%
\url{https://doi.org/10.1145/2608628.2608671}
\showDOI{\tempurl}


\bibitem[Arnold et~al\mbox{.}(2016)]%
        {ArGiRo16}
\bibfield{author}{\bibinfo{person}{Andrew Arnold}, \bibinfo{person}{Mark
  Giesbrecht}, {and} \bibinfo{person}{Daniel~S. Roche}.}
  \bibinfo{year}{2016}\natexlab{}.
\newblock \showarticletitle{Faster sparse multivariate polynomial interpolation
  of straight-line programs}.
\newblock \bibinfo{journal}{\emph{Journal of Symbolic Computation}}
  \bibinfo{volume}{75} (\bibinfo{year}{2016}), \bibinfo{pages}{4--24}.
\newblock
\urldef\tempurl%
\url{https://doi.org/10.1016/j.jsc.2015.11.005}
\showDOI{\tempurl}


\bibitem[Arnold and Roche(2014)]%
        {ArRo14}
\bibfield{author}{\bibinfo{person}{Arnold Arnold} {and}
  \bibinfo{person}{Daniel~S. Roche}.} \bibinfo{year}{2014}\natexlab{}.
\newblock \showarticletitle{Multivariate sparse interpolation using randomized
  {Kronecker} substitutions}. In \bibinfo{booktitle}{\emph{Proceedings of the
  2014 international symposium on symbolic and algebraic computation}}
  \emph{(\bibinfo{series}{ISSAC'14})}. \bibinfo{pages}{35--42}.
\newblock
\urldef\tempurl%
\url{https://doi.org/10.1145/2608628.2608674}
\showDOI{\tempurl}


\bibitem[Arnold and Roche(2015)]%
        {ArRo15}
\bibfield{author}{\bibinfo{person}{Andrew Arnold} {and}
  \bibinfo{person}{Daniel~S. Roche}.} \bibinfo{year}{2015}\natexlab{}.
\newblock \showarticletitle{Output-sensitive algorithms for sumset and sparse
  polynomial multiplication}. In \bibinfo{booktitle}{\emph{Proceedings of the
  2015 international symposium on symbolic and algebraic computation}}
  \emph{(\bibinfo{series}{ISSAC'15})}. \bibinfo{pages}{29--36}.
\newblock
\urldef\tempurl%
\url{https://doi.org/10.1145/2755996.2756653}
\showDOI{\tempurl}


\bibitem[Ben-Or and Tiwari(1988)]%
        {BenOrTiwari}
\bibfield{author}{\bibinfo{person}{Michael Ben-Or} {and}
  \bibinfo{person}{Prasoon Tiwari}.} \bibinfo{year}{1988}\natexlab{}.
\newblock \showarticletitle{A {Deterministic} {Algorithm} for {Sparse}
  {Multivariate} {Polynomial} {Interpolation}}. In
  \bibinfo{booktitle}{\emph{Proceedings of the Annual Symposium on Foundations
  of Computer Science}} \emph{(\bibinfo{series}{STOC'88})}.
  \bibinfo{pages}{301--309}.
\newblock
\urldef\tempurl%
\url{https://doi.org/10.1145/62212.62241}
\showDOI{\tempurl}


\bibitem[Birmpilis et~al\mbox{.}(2023)]%
        {BirLabSto2023}
\bibfield{author}{\bibinfo{person}{Stavros Birmpilis}, \bibinfo{person}{George
  Labahn}, {and} \bibinfo{person}{Arne Storjohann}.}
  \bibinfo{year}{2023}\natexlab{}.
\newblock \showarticletitle{A fast algorithm for computing the Smith normal
  form with multipliers for a nonsingular integer matrix}.
\newblock \bibinfo{journal}{\emph{Journal of Symbolic Computation}}
  \bibinfo{volume}{116} (\bibinfo{year}{2023}), \bibinfo{pages}{146--182}.
\newblock
\showISSN{0747-7171}
\urldef\tempurl%
\url{https://doi.org/10.1016/j.jsc.2022.09.002}
\showDOI{\tempurl}


\bibitem[Bl{\"a}ser and Jindal(2014)]%
        {BlaserJindal2014}
\bibfield{author}{\bibinfo{person}{Markus Bl{\"a}ser} {and}
  \bibinfo{person}{Gorav Jindal}.} \bibinfo{year}{2014}\natexlab{}.
\newblock \showarticletitle{A New Deterministic Algorithm for Sparse
  Multivariate Polynomial Interpolation}. In
  \bibinfo{booktitle}{\emph{Proceedings of the 39th {{International Symposium}}
  on {{Symbolic}} and {{Algebraic Computation}}}}
  \emph{(\bibinfo{series}{{{ISSAC}} '14})}. \bibinfo{publisher}{{Association
  for Computing Machinery}}, \bibinfo{address}{{New York, NY, USA}},
  \bibinfo{pages}{51--58}.
\newblock
\showISBNx{978-1-4503-2501-1}
\urldef\tempurl%
\url{https://doi.org/10.1145/2608628.2608648}
\showDOI{\tempurl}


\bibitem[Bluestein(1970)]%
        {Blu70}
\bibfield{author}{\bibinfo{person}{Leo Bluestein}.}
  \bibinfo{year}{1970}\natexlab{}.
\newblock \showarticletitle{A linear filtering approach to the computation of
  discrete Fourier transform}.
\newblock \bibinfo{journal}{\emph{IEEE Transactions on Audio and
  Electroacoustics}} \bibinfo{volume}{18}, \bibinfo{number}{4}
  (\bibinfo{year}{1970}), \bibinfo{pages}{451--455}.
\newblock
\urldef\tempurl%
\url{https://doi.org/10.1109/TAU.1970.1162132}
\showDOI{\tempurl}


\bibitem[Bodrato and Zanoni(2020)]%
        {BodratoZanoni2020}
\bibfield{author}{\bibinfo{person}{Marco Bodrato} {and}
  \bibinfo{person}{Alberto Zanoni}.} \bibinfo{year}{2020}\natexlab{}.
\newblock \showarticletitle{Univariate {{Polynomials}} with {{Long Unbalanced
  Coefficients}} as {{Bivariate Balanced Ones}}: {{A Toom-Cook Multiplication
  Approach}}}. In \bibinfo{booktitle}{\emph{Computer {{Algebra}} in
  {{Scientific Computing}}}}, \bibfield{editor}{\bibinfo{person}{Fran{\c c}ois
  Boulier}, \bibinfo{person}{Matthew England}, \bibinfo{person}{Timur~M.
  Sadykov}, {and} \bibinfo{person}{Evgenii~V. Vorozhtsov}} (Eds.).
  \bibinfo{pages}{91--107}.
\newblock
\urldef\tempurl%
\url{https://doi.org/10.1007/978-3-030-60026-6_6}
\showDOI{\tempurl}


\bibitem[Cantor and Kaltofen(1991)]%
        {cantor1991}
\bibfield{author}{\bibinfo{person}{David~G. Cantor} {and}
  \bibinfo{person}{Erich Kaltofen}.} \bibinfo{year}{1991}\natexlab{}.
\newblock \showarticletitle{On fast multiplication of polynomials over
  arbitrary algebras}.
\newblock \bibinfo{journal}{\emph{Acta Informatica}}  \bibinfo{volume}{28}
  (\bibinfo{year}{1991}), \bibinfo{pages}{693--701}.
\newblock


\bibitem[Chen et~al\mbox{.}(2016)]%
        {ParallelPolyIntMul2016}
\bibfield{author}{\bibinfo{person}{Changbo Chen}, \bibinfo{person}{Svyatoslav
  Covanov}, \bibinfo{person}{Farnam Mansouri}, \bibinfo{person}{Marc~Moreno
  Maza}, \bibinfo{person}{Ning Xie}, {and} \bibinfo{person}{Yuzhen Xie}.}
  \bibinfo{year}{2016}\natexlab{}.
\newblock \showarticletitle{Parallel Integer Polynomial Multiplication}. In
  \bibinfo{booktitle}{\emph{18th International Symposium on Symbolic and
  Numeric Algorithms for Scientific Computing (SYNASC)}}.
  \bibinfo{pages}{72--80}.
\newblock
\urldef\tempurl%
\url{https://doi.org/10.1109/SYNASC.2016.024}
\showDOI{\tempurl}


\bibitem[Cole and Hariharan(2002)]%
        {ColeHariharan}
\bibfield{author}{\bibinfo{person}{Richard Cole} {and} \bibinfo{person}{Ramesh
  Hariharan}.} \bibinfo{year}{2002}\natexlab{}.
\newblock \showarticletitle{Verifying candidate matches in sparse and wildcard
  matching}. In \bibinfo{booktitle}{\emph{Proceedings of the Annual Symposium
  on Foundations of Computer Science}} \emph{(\bibinfo{series}{STOC'02})}.
  \bibinfo{pages}{592--601}.
\newblock
\urldef\tempurl%
\url{https://doi.org/10.1145/509907.509992}
\showDOI{\tempurl}


\bibitem[Cook(1966)]%
        {Cook1966}
\bibfield{author}{\bibinfo{person}{Stephen~A. Cook}.}
  \bibinfo{year}{1966}\natexlab{}.
\newblock \emph{\bibinfo{title}{On the {{Minimum Computation Time}} of
  {{Functions}}}}.
\newblock \bibinfo{thesistype}{Ph.\,D. Dissertation}.
  \bibinfo{school}{Harvard}.
\newblock


\bibitem[Cuyt and Lee(2011)]%
        {CuytLee2011}
\bibfield{author}{\bibinfo{person}{Annie Cuyt} {and} \bibinfo{person}{Wen-shin
  Lee}.} \bibinfo{year}{2011}\natexlab{}.
\newblock \showarticletitle{Sparse Interpolation of Multivariate Rational
  Functions}.
\newblock \bibinfo{journal}{\emph{Theoretical Computer Science}}
  \bibinfo{volume}{412}, \bibinfo{number}{16} (\bibinfo{date}{April}
  \bibinfo{year}{2011}), \bibinfo{pages}{1445--1456}.
\newblock
\showISSN{0304-3975}
\urldef\tempurl%
\url{https://doi.org/10.1016/j.tcs.2010.11.050}
\showDOI{\tempurl}


\bibitem[Fateman(2010)]%
        {Fateman2010}
\bibfield{author}{\bibinfo{person}{Richard~J Fateman}.}
  \bibinfo{year}{2010}\natexlab{}.
\newblock \bibinfo{title}{Can You Save Time in Multiplying Polynomials by
  Encoding Them as Integers?}  (\bibinfo{year}{2010}).
\newblock
\newblock
\shownote{\url{https://people.eecs.berkeley.edu/~fateman/papers/polysbyGMP.pdf}}.


\bibitem[Garg and Schost(2009)]%
        {GaSch09}
\bibfield{author}{\bibinfo{person}{Sanchit Garg} {and}
  \bibinfo{person}{{\'E}ric Schost}.} \bibinfo{year}{2009}\natexlab{}.
\newblock \showarticletitle{Interpolation of polynomials given by straight-line
  programs}.
\newblock \bibinfo{journal}{\emph{Theoretical Computer Science}}
  \bibinfo{volume}{410}, \bibinfo{number}{27} (\bibinfo{year}{2009}),
  \bibinfo{pages}{2659--2662}.
\newblock
\urldef\tempurl%
\url{https://doi.org/10.1016/j.tcs.2009.03.030}
\showDOI{\tempurl}


\bibitem[Giesbrecht and Roche(2011)]%
        {GiRo2011}
\bibfield{author}{\bibinfo{person}{Mark Giesbrecht} {and}
  \bibinfo{person}{Daniel~S. Roche}.} \bibinfo{year}{2011}\natexlab{}.
\newblock \showarticletitle{Diversification Improves Interpolation}. In
  \bibinfo{booktitle}{\emph{Proceedings of the 2011 international symposium on
  symbolic and algebraic computation}} \emph{(\bibinfo{series}{ISSAC'11})}.
  \bibinfo{pages}{123--130}.
\newblock
\showISBNx{978-1-4503-0675-1}
\urldef\tempurl%
\url{https://doi.org/10.1145/1993886.1993909}
\showDOI{\tempurl}


\bibitem[Giorgi et~al\mbox{.}(2020)]%
        {giorgi2020:sparsemul}
\bibfield{author}{\bibinfo{person}{Pascal Giorgi}, \bibinfo{person}{Bruno
  Grenet}, {and} \bibinfo{person}{Armelle Perret~du Cray}.}
  \bibinfo{year}{2020}\natexlab{}.
\newblock \showarticletitle{Essentially optimal sparse polynomial
  multiplication}. In \bibinfo{booktitle}{\emph{Proceedings of the 2020
  international symposium on symbolic and algebraic computation}}
  \emph{(\bibinfo{series}{ISSAC'20})}. \bibinfo{pages}{202--209}.
\newblock
\urldef\tempurl%
\url{https://doi.org/10.1145/3373207.3404026}
\showDOI{\tempurl}


\bibitem[Giorgi et~al\mbox{.}(2023)]%
        {GiorgiGrenetPerretduCray2023}
\bibfield{author}{\bibinfo{person}{Pascal Giorgi}, \bibinfo{person}{Bruno
  Grenet}, {and} \bibinfo{person}{Armelle {Perret du Cray}}.}
  \bibinfo{year}{2023}\natexlab{}.
\newblock \showarticletitle{Polynomial Modular Product Verification and Its
  Implications}.
\newblock \bibinfo{journal}{\emph{Journal of Symbolic Computation}}
  \bibinfo{volume}{116} (\bibinfo{year}{2023}), \bibinfo{pages}{98--129}.
\newblock
\urldef\tempurl%
\url{https://doi.org/10.1016/j.jsc.2022.08.011}
\showDOI{\tempurl}


\bibitem[Giorgi et~al\mbox{.}(2022)]%
        {GGPR22a}
\bibfield{author}{\bibinfo{person}{Pascal Giorgi}, \bibinfo{person}{Bruno
  Grenet}, \bibinfo{person}{Armelle Perret~du Cray}, {and}
  \bibinfo{person}{Daniel~S. Roche}.} \bibinfo{year}{2022}\natexlab{}.
\newblock \showarticletitle{{Sparse polynomial interpolation and division in
  soft-linear time}}. In \bibinfo{booktitle}{\emph{Proceedings of the 2022
  international symposium on symbolic and algebraic computation}}
  \emph{(\bibinfo{series}{ISSAC'22})}. \bibinfo{publisher}{ACM},
  \bibinfo{pages}{459{--}468}.
\newblock
\urldef\tempurl%
\url{https://doi.org/10.1145/3476446.3536173}
\showDOI{\tempurl}
\showeprint[arxiv]{2202.08106}


\bibitem[Grigoriev et~al\mbox{.}(1990)]%
        {GrigorievKarpinskiSinger1990}
\bibfield{author}{\bibinfo{person}{Dima~Yu. Grigoriev}, \bibinfo{person}{Marek
  Karpinski}, {and} \bibinfo{person}{Michael~F. Singer}.}
  \bibinfo{year}{1990}\natexlab{}.
\newblock \showarticletitle{Fast {{Parallel Algorithms}} for {{Sparse
  Multivariate Polynomial Interpolation}} over {{Finite Fields}}}.
\newblock \bibinfo{journal}{\emph{SIAM J. Comput.}} \bibinfo{volume}{19},
  \bibinfo{number}{6} (\bibinfo{date}{Dec.} \bibinfo{year}{1990}),
  \bibinfo{pages}{1059--1063}.
\newblock
\urldef\tempurl%
\url{https://doi.org/10.1137/0219073}
\showDOI{\tempurl}


\bibitem[Harvey and \Hoeven{van der Hoeven}(2021)]%
        {HarveyHoeven2021}
\bibfield{author}{\bibinfo{person}{David Harvey} {and} \bibinfo{person}{Joris
  \Hoeven{van der Hoeven}}.} \bibinfo{year}{2021}\natexlab{}.
\newblock \showarticletitle{Integer Multiplication in Time {{$O(n \log n)$}}}.
\newblock \bibinfo{journal}{\emph{Annals of Mathematics}}
  \bibinfo{volume}{193}, \bibinfo{number}{2} (\bibinfo{year}{2021}),
  \bibinfo{pages}{563--617}.
\newblock
\urldef\tempurl%
\url{https://doi.org/10.4007/annals.2021.193.2.4}
\showDOI{\tempurl}


\bibitem[Harvey and \Hoeven{van der Hoeven}(2022)]%
        {HarveyHoeven2022}
\bibfield{author}{\bibinfo{person}{David Harvey} {and} \bibinfo{person}{Joris
  \Hoeven{van der Hoeven}}.} \bibinfo{year}{2022}\natexlab{}.
\newblock \showarticletitle{Polynomial {{Multiplication}} over {{Finite
  Fields}} in {{Time}} {{$O(n \log n)$}}}.
\newblock \bibinfo{journal}{\emph{J. ACM}} \bibinfo{volume}{69},
  \bibinfo{number}{2} (\bibinfo{year}{2022}), \bibinfo{pages}{12:1--12:40}.
\newblock
\urldef\tempurl%
\url{https://doi.org/10.1145/3505584}
\showDOI{\tempurl}


\bibitem[Harvey et~al\mbox{.}(2017)]%
        {HHL17}
\bibfield{author}{\bibinfo{person}{David Harvey}, \bibinfo{person}{Joris
  \Hoeven{van der Hoeven}}, {and} \bibinfo{person}{Gr\'{e}goire Lecerf}.}
  \bibinfo{year}{2017}\natexlab{}.
\newblock \showarticletitle{Faster Polynomial Multiplication over Finite
  Fields}.
\newblock \bibinfo{journal}{\emph{Journal of ACM}} \bibinfo{volume}{63},
  \bibinfo{number}{6} (\bibinfo{date}{jan} \bibinfo{year}{2017}),
  \bibinfo{pages}{1--23}.
\newblock
\urldef\tempurl%
\url{https://doi.org/10.1145/3005344}
\showDOI{\tempurl}


\bibitem[\Hoeven{van der Hoeven} et~al\mbox{.}(2013)]%
        {vdHLebSch2013}
\bibfield{author}{\bibinfo{person}{Joris \Hoeven{van der Hoeven}},
  \bibinfo{person}{Romain Lebreton}, {and} \bibinfo{person}{{\'E}ric Schost}.}
  \bibinfo{year}{2013}\natexlab{}.
\newblock \showarticletitle{Structured {FFT} and {TFT}: {Symmetric} and
  {Lattice} {Polynomials}}. In \bibinfo{booktitle}{\emph{Proceedings of the
  2013 international symposium on symbolic and algebraic computation}}
  \emph{(\bibinfo{series}{ISSAC'13})}. \bibinfo{publisher}{ACM},
  \bibinfo{pages}{355--362}.
\newblock
\urldef\tempurl%
\url{https://doi.org/10.1145/2465506.2465526}
\showDOI{\tempurl}


\bibitem[\Hoeven{van der Hoeven} and Lecerf(2012)]%
        {vdHLec2012}
\bibfield{author}{\bibinfo{person}{Joris \Hoeven{van der Hoeven}} {and}
  \bibinfo{person}{Grégoire Lecerf}.} \bibinfo{year}{2012}\natexlab{}.
\newblock \showarticletitle{On the {Complexity} of {Multivariate} {Blockwise}
  {Polynomial} {Multiplication}}. In \bibinfo{booktitle}{\emph{Proceedings of
  the 2012 international symposium on symbolic and algebraic computation}}
  \emph{(\bibinfo{series}{ISSAC'12})}. \bibinfo{pages}{211--218}.
\newblock
\urldef\tempurl%
\url{https://doi.org/10.1145/2442829.2442861}
\showDOI{\tempurl}


\bibitem[\Hoeven{van der Hoeven} and Lecerf(2014)]%
        {vdHLec2014}
\bibfield{author}{\bibinfo{person}{Joris \Hoeven{van der Hoeven}} {and}
  \bibinfo{person}{Grégoire Lecerf}.} \bibinfo{year}{2014}\natexlab{}.
\newblock \showarticletitle{Sparse {Polynomial} {Interpolation} in {Practice}}.
\newblock \bibinfo{journal}{\emph{ACM Communications in Computer Algebra}}
  \bibinfo{volume}{48}, \bibinfo{number}{3/4} (\bibinfo{year}{2014}),
  \bibinfo{pages}{187--191}.
\newblock
\urldef\tempurl%
\url{https://doi.org/10.1145/2733693.2733721}
\showDOI{\tempurl}


\bibitem[\Hoeven{van der Hoeven} and Lecerf(2019)]%
        {vdHLec2019}
\bibfield{author}{\bibinfo{person}{Joris \Hoeven{van der Hoeven}} {and}
  \bibinfo{person}{Grégoire Lecerf}.} \bibinfo{year}{2019}\natexlab{}.
\newblock \bibinfo{title}{Sparse polynomial interpolation. {Exploring} fast
  heuristic algorithms over finite fields}.  (\bibinfo{year}{2019}).
\newblock
\urldef\tempurl%
\url{https://hal.archives-ouvertes.fr/hal-02382117}
\showURL{%
\tempurl}
\newblock
\shownote{preprint}.


\bibitem[\Hoeven{van der Hoeven} and Lecerf(2021)]%
        {HoevenLecerf2021}
\bibfield{author}{\bibinfo{person}{Joris \Hoeven{van der Hoeven}} {and}
  \bibinfo{person}{Gr{\'e}goire Lecerf}.} \bibinfo{year}{2021}\natexlab{}.
\newblock \showarticletitle{On Sparse Interpolation of Rational Functions and
  Gcds}.
\newblock \bibinfo{journal}{\emph{ACM Communications in Computer Algebra}}
  \bibinfo{volume}{55}, \bibinfo{number}{1} (\bibinfo{date}{May}
  \bibinfo{year}{2021}), \bibinfo{pages}{1--12}.
\newblock
\showISSN{1932-2240}
\urldef\tempurl%
\url{https://doi.org/10.1145/3466895.3466896}
\showDOI{\tempurl}


\bibitem[\Hoeven{van der Hoeven} and Lecerf(2023)]%
        {vdHLec2023Misc}
\bibfield{author}{\bibinfo{person}{Joris \Hoeven{van der Hoeven}} {and}
  \bibinfo{person}{Gr{\'e}goire Lecerf}.} \bibinfo{year}{2023}\natexlab{}.
\newblock \bibinfo{title}{Fast interpolation of sparse multivariate
  polynomials}.
\newblock
  \bibinfo{howpublished}{arXiv:\href{https://arxiv.org/abs/2312.17664}{2312.17664}}.
\newblock


\bibitem[Huang and Rao(1999)]%
        {HuangRao1999}
\bibfield{author}{\bibinfo{person}{Ming-Deh~A Huang} {and}
  \bibinfo{person}{Ashwin~J Rao}.} \bibinfo{year}{1999}\natexlab{}.
\newblock \showarticletitle{Interpolation of {{Sparse Multivariate
  Polynomials}} over {{Large Finite Fields}} with {{Applications}}}.
\newblock \bibinfo{journal}{\emph{Journal of Algorithms}} \bibinfo{volume}{33},
  \bibinfo{number}{2} (\bibinfo{date}{Nov.} \bibinfo{year}{1999}),
  \bibinfo{pages}{204--228}.
\newblock
\urldef\tempurl%
\url{https://doi.org/10.1006/jagm.1999.1045}
\showDOI{\tempurl}


\bibitem[Huang(2019)]%
        {Huang2019}
\bibfield{author}{\bibinfo{person}{Qiao-Long Huang}.}
  \bibinfo{year}{2019}\natexlab{}.
\newblock \showarticletitle{Sparse Polynomial Interpolation over Fields with
  Large or Zero Characteristic}. In \bibinfo{booktitle}{\emph{Proceedings of
  the 2019 international symposium on symbolic and algebraic computation}}
  \emph{(\bibinfo{series}{ISSAC'19})}. \bibinfo{pages}{219--226}.
\newblock
\urldef\tempurl%
\url{https://doi.org/10.1145/3326229.3326250}
\showDOI{\tempurl}


\bibitem[Huang(2021)]%
        {Huang2021}
\bibfield{author}{\bibinfo{person}{Qiao-Long Huang}.}
  \bibinfo{year}{2021}\natexlab{}.
\newblock \showarticletitle{Sparse Polynomial Interpolation Based on
  Diversification}.
\newblock \bibinfo{journal}{\emph{Science China Mathematics}}
  \bibinfo{volume}{65} (\bibinfo{date}{April} \bibinfo{year}{2021}),
  \bibinfo{pages}{1147--1162}.
\newblock
\showISSN{1869-1862}
\urldef\tempurl%
\url{https://doi.org/10.1007/s11425-020-1791-5}
\showDOI{\tempurl}


\bibitem[Huang(2023)]%
        {Huang2023}
\bibfield{author}{\bibinfo{person}{Qiao-Long Huang}.}
  \bibinfo{year}{2023}\natexlab{}.
\newblock \showarticletitle{Sparse Polynomial Interpolation Based on
  Derivatives}.
\newblock \bibinfo{journal}{\emph{Journal of Symbolic Computation}}
  \bibinfo{volume}{114} (\bibinfo{date}{Jan.} \bibinfo{year}{2023}),
  \bibinfo{pages}{359--375}.
\newblock
\showISSN{0747-7171}
\urldef\tempurl%
\url{https://doi.org/10.1016/j.jsc.2022.06.002}
\showDOI{\tempurl}


\bibitem[Huang and Gao(2019)]%
        {HuangGao2019}
\bibfield{author}{\bibinfo{person}{Qiao-Long Huang} {and}
  \bibinfo{person}{Xiao-Shan Gao}.} \bibinfo{year}{2019}\natexlab{}.
\newblock \showarticletitle{Revisit {Sparse} {Polynomial} {Interpolation}
  {Based} on {Randomized} {Kronecker} {Substitution}}. In
  \bibinfo{booktitle}{\emph{Computer Algebra in Scientific Computing}}
  \emph{(\bibinfo{series}{CASC'19})}. \bibinfo{publisher}{Springer},
  \bibinfo{pages}{215--235}.
\newblock
\urldef\tempurl%
\url{https://doi.org/10.1007/978-3-030-26831-2_15}
\showDOI{\tempurl}


\bibitem[Huang and Gao(2020)]%
        {HuangGao2020}
\bibfield{author}{\bibinfo{person}{Qiao-Long Huang} {and}
  \bibinfo{person}{Xiao-Shan Gao}.} \bibinfo{year}{2020}\natexlab{}.
\newblock \showarticletitle{Faster interpolation algorithms for sparse
  multivariate polynomials given by straight-line programs}.
\newblock \bibinfo{journal}{\emph{Journal of Symbolic Computation}}
  \bibinfo{volume}{101} (\bibinfo{year}{2020}), \bibinfo{pages}{367--386}.
\newblock
\urldef\tempurl%
\url{https://doi.org/10.1016/j.jsc.2019.10.005}
\showDOI{\tempurl}


\bibitem[Indyk et~al\mbox{.}(2014)]%
        {sparse-fft-soda14}
\bibfield{author}{\bibinfo{person}{Piotr Indyk}, \bibinfo{person}{Michael
  Kapralov}, {and} \bibinfo{person}{Eric Price}.}
  \bibinfo{year}{2014}\natexlab{}.
\newblock \showarticletitle{{(Nearly) Sample-Optimal Sparse Fourier
  Transform}}. In \bibinfo{booktitle}{\emph{Proceedings of the 2014 Annual
  ACM-SIAM Symposium on Discrete Algorithms (SODA)}}.
  \bibinfo{pages}{480--499}.
\newblock
\urldef\tempurl%
\url{https://doi.org/10.1137/1.9781611973402.36}
\showDOI{\tempurl}


\bibitem[Javadi and Monagan(2010)]%
        {JavadiMonagan2010}
\bibfield{author}{\bibinfo{person}{Seyed Mohammad~Mahdi Javadi} {and}
  \bibinfo{person}{Michael Monagan}.} \bibinfo{year}{2010}\natexlab{}.
\newblock \showarticletitle{Parallel Sparse Polynomial Interpolation over
  Finite Fields}. In \bibinfo{booktitle}{\emph{Proceedings of the 4th
  {{International Workshop}} on {{Parallel}} and {{Symbolic Computation}}}}
  \emph{(\bibinfo{series}{{{PASCO}} '10})}. \bibinfo{publisher}{{Association
  for Computing Machinery}}, \bibinfo{address}{{New York, NY, USA}},
  \bibinfo{pages}{160--168}.
\newblock
\showISBNx{978-1-4503-0067-4}
\urldef\tempurl%
\url{https://doi.org/10.1145/1837210.1837233}
\showDOI{\tempurl}


\bibitem[Johnson(1974)]%
        {Johnson74}
\bibfield{author}{\bibinfo{person}{Stephen~C. Johnson}.}
  \bibinfo{year}{1974}\natexlab{}.
\newblock \showarticletitle{Sparse polynomial arithmetic}.
\newblock \bibinfo{journal}{\emph{ACM SIGSAM Bulletin}} \bibinfo{volume}{8},
  \bibinfo{number}{3} (\bibinfo{year}{1974}), \bibinfo{pages}{63--71}.
\newblock
\urldef\tempurl%
\url{https://doi.org/10.1145/1086837.1086847}
\showDOI{\tempurl}


\bibitem[Kaltofen et~al\mbox{.}(1990)]%
        {KYW1990}
\bibfield{author}{\bibinfo{person}{Erich Kaltofen}, \bibinfo{person}{Yagati~N.
  Lakshman}, {and} \bibinfo{person}{J.-M. Wiley}.}
  \bibinfo{year}{1990}\natexlab{}.
\newblock \showarticletitle{Modular Rational Sparse Multivariate Polynomial
  Interpolation}. In \bibinfo{booktitle}{\emph{Proceedings of the 1990
  International Symposium on {{Symbolic}} and Algebraic Computation}}
  \emph{(\bibinfo{series}{{{ISSAC}} '90})}. \bibinfo{publisher}{{Association
  for Computing Machinery}}, \bibinfo{address}{{New York, NY, USA}},
  \bibinfo{pages}{135--139}.
\newblock
\showISBNx{978-0-201-54892-1}
\urldef\tempurl%
\url{https://doi.org/10.1145/96877.96912}
\showDOI{\tempurl}


\bibitem[Kaltofen and Lee(2003)]%
        {KaltofenLee2003}
\bibfield{author}{\bibinfo{person}{Erich Kaltofen} {and}
  \bibinfo{person}{{Wen-shin} Lee}.} \bibinfo{year}{2003}\natexlab{}.
\newblock \showarticletitle{Early termination in sparse interpolation
  algorithms}.
\newblock \bibinfo{journal}{\emph{Journal of Symbolic Computation}}
  \bibinfo{volume}{36}, \bibinfo{number}{3} (\bibinfo{year}{2003}),
  \bibinfo{pages}{365--400}.
\newblock
\urldef\tempurl%
\url{https://doi.org/10.1016/S0747-7171(03)00088-9}
\showDOI{\tempurl}


\bibitem[Kaltofen and Trager(1988)]%
        {KaltofenTrager:STOC88}
\bibfield{author}{\bibinfo{person}{E. Kaltofen} {and} \bibinfo{person}{B.
  Trager}.} \bibinfo{year}{1988}\natexlab{}.
\newblock \showarticletitle{Computing with polynomials given by black boxes for
  their evaluations: greatest common divisors, factorization, separation of
  numerators and denominators}. In \bibinfo{booktitle}{\emph{Proceedings of the
  Annual Symposium on Foundations of Computer Science}}
  \emph{(\bibinfo{series}{STOC'88})}. \bibinfo{pages}{296--305}.
\newblock
\urldef\tempurl%
\url{https://doi.org/10.1109/SFCS.1988.21946}
\showDOI{\tempurl}


\bibitem[Kaltofen and Yagati(1988)]%
        {KY88}
\bibfield{author}{\bibinfo{person}{Erich Kaltofen} {and}
  \bibinfo{person}{Lakshman Yagati}.} \bibinfo{year}{1988}\natexlab{}.
\newblock \showarticletitle{Improved sparse multivariate polynomial
  interpolation algorithms}. In \bibinfo{booktitle}{\emph{Proceedings of the
  1988 international symposium on symbolic and algebraic computation}}
  \emph{(\bibinfo{series}{ISSAC'88})},
  \bibfield{editor}{\bibinfo{person}{P.~Gianni}} (Ed.).
  \bibinfo{publisher}{Springer Berlin Heidelberg}, \bibinfo{address}{Berlin,
  Heidelberg}, \bibinfo{pages}{467--474}.
\newblock


\bibitem[Kaltofen and Yang(2007)]%
        {KaltofenYang2007}
\bibfield{author}{\bibinfo{person}{Erich Kaltofen} {and}
  \bibinfo{person}{Zhengfeng Yang}.} \bibinfo{year}{2007}\natexlab{}.
\newblock \showarticletitle{On Exact and Approximate Interpolation of Sparse
  Rational Functions}. In \bibinfo{booktitle}{\emph{Proceedings of the 2007
  International Symposium on {{Symbolic}} and Algebraic Computation}}
  \emph{(\bibinfo{series}{{{ISSAC}} '07})}. \bibinfo{publisher}{{Association
  for Computing Machinery}}, \bibinfo{address}{{New York, NY, USA}},
  \bibinfo{pages}{203--210}.
\newblock
\showISBNx{978-1-59593-743-8}
\urldef\tempurl%
\url{https://doi.org/10.1145/1277548.1277577}
\showDOI{\tempurl}


\bibitem[Kaltofen(2010)]%
        {Kal10a}
\bibfield{author}{\bibinfo{person}{Erich~L. Kaltofen}.}
  \bibinfo{year}{2010}\natexlab{}.
\newblock \showarticletitle{Fifteen years after {DSC} and {WLSS2}: {W}hat
  parallel computations {I} do today [invited lecture at {PASCO} 2010]}. In
  \bibinfo{booktitle}{\emph{Proceedings of the 4th International Workshop on
  Parallel and Symbolic Computation}} (Grenoble, France)
  \emph{(\bibinfo{series}{PASCO '10})}. \bibinfo{publisher}{ACM},
  \bibinfo{pages}{10--17}.
\newblock
\urldef\tempurl%
\url{https://doi.org/10.1145/1837210.1837213}
\showDOI{\tempurl}


\bibitem[Kaltofen and Nehring(2011)]%
        {KaltofenNehring2011}
\bibfield{author}{\bibinfo{person}{Erich~L. Kaltofen} {and}
  \bibinfo{person}{Michael Nehring}.} \bibinfo{year}{2011}\natexlab{}.
\newblock \showarticletitle{Supersparse Black Box Rational Function
  Interpolation}. In \bibinfo{booktitle}{\emph{Proceedings of the 36th
  International Symposium on {{Symbolic}} and Algebraic Computation}}
  \emph{(\bibinfo{series}{{{ISSAC}} '11})}. \bibinfo{publisher}{{Association
  for Computing Machinery}}, \bibinfo{address}{{New York, NY, USA}},
  \bibinfo{pages}{177--186}.
\newblock
\showISBNx{978-1-4503-0675-1}
\urldef\tempurl%
\url{https://doi.org/10.1145/1993886.1993916}
\showDOI{\tempurl}


\bibitem[Karatsuba and Ofman(1962)]%
        {karatsuba1962}
\bibfield{author}{\bibinfo{person}{Anatolii Karatsuba} {and}
  \bibinfo{person}{Yu Ofman}.} \bibinfo{year}{1962}\natexlab{}.
\newblock \showarticletitle{Multiplication of Multidigit Numbers on Automata}.
\newblock \bibinfo{journal}{\emph{Soviet Physics Doklady}}  \bibinfo{volume}{7}
  (\bibinfo{date}{12} \bibinfo{year}{1962}), \bibinfo{pages}{595}.
\newblock


\bibitem[Kronecker(1882)]%
        {Kronecker1882}
\bibfield{author}{\bibinfo{person}{Leopold Kronecker}.}
  \bibinfo{year}{1882}\natexlab{}.
\newblock \showarticletitle{{Grundz\"uge einer arithmetischen Theorie der
  algebraischen Gr\"ossen}}.
\newblock \bibinfo{journal}{\emph{Journal f\"ur die reine und angewandte
  Mathematik}}  \bibinfo{volume}{92} (\bibinfo{year}{1882}),
  \bibinfo{pages}{1--122}.
\newblock


\bibitem[Labahn et~al\mbox{.}(2022)]%
        {LabNeiVuZhou2022}
\bibfield{author}{\bibinfo{person}{George Labahn}, \bibinfo{person}{Vincent
  Neiger}, \bibinfo{person}{Thi~Xuan Vu}, {and} \bibinfo{person}{Wei Zhou}.}
  \bibinfo{year}{2022}\natexlab{}.
\newblock \showarticletitle{Rank-Sensitive Computation of the Rank Profile of a
  Polynomial Matrix}. In \bibinfo{booktitle}{\emph{Proceedings of the 2022
  International Symposium on Symbolic and Algebraic Computation}}
  (Villeneuve-d'Ascq, France) \emph{(\bibinfo{series}{ISSAC '22})}.
  \bibinfo{publisher}{Association for Computing Machinery},
  \bibinfo{address}{New York, NY, USA}, \bibinfo{pages}{351–360}.
\newblock
\showISBNx{9781450386883}
\urldef\tempurl%
\url{https://doi.org/10.1145/3476446.3535495}
\showDOI{\tempurl}


\bibitem[Labahn et~al\mbox{.}(2017)]%
        {LabNeiZhou2017}
\bibfield{author}{\bibinfo{person}{George Labahn}, \bibinfo{person}{Vincent
  Neiger}, {and} \bibinfo{person}{Wei Zhou}.} \bibinfo{year}{2017}\natexlab{}.
\newblock \showarticletitle{Fast, deterministic computation of the Hermite
  normal form and determinant of a polynomial matrix}.
\newblock \bibinfo{journal}{\emph{Journal of Complexity}}  \bibinfo{volume}{42}
  (\bibinfo{year}{2017}), \bibinfo{pages}{44--71}.
\newblock
\showISSN{0885-064X}
\urldef\tempurl%
\url{https://doi.org/10.1016/j.jco.2017.03.003}
\showDOI{\tempurl}


\bibitem[Mansour(1995)]%
        {Mansour1995}
\bibfield{author}{\bibinfo{person}{Yishay Mansour}.}
  \bibinfo{year}{1995}\natexlab{}.
\newblock \showarticletitle{Randomized {{Interpolation}} and {{Approximation}}
  of {{Sparse Polynomials}}}.
\newblock \bibinfo{journal}{\emph{SIAM J. Comput.}} \bibinfo{volume}{24},
  \bibinfo{number}{2} (\bibinfo{year}{1995}), \bibinfo{pages}{357--368}.
\newblock
\urldef\tempurl%
\url{https://doi.org/10.1137/S0097539792239291}
\showDOI{\tempurl}


\bibitem[Monagan and Pearce(2009)]%
        {MonaganPearce2009}
\bibfield{author}{\bibinfo{person}{Michael Monagan} {and}
  \bibinfo{person}{Roman Pearce}.} \bibinfo{year}{2009}\natexlab{}.
\newblock \showarticletitle{Parallel sparse polynomial multiplication using
  heaps}. In \bibinfo{booktitle}{\emph{Proceedings of the 2009 international
  symposium on symbolic and algebraic computation}}
  \emph{(\bibinfo{series}{ISSAC'09})}. \bibinfo{pages}{263--270}.
\newblock
\urldef\tempurl%
\url{https://doi.org/10.1145/1576702.1576739}
\showDOI{\tempurl}


\bibitem[Monagan and Pearce(2011)]%
        {MonaganPearce2011}
\bibfield{author}{\bibinfo{person}{Michael Monagan} {and}
  \bibinfo{person}{Roman Pearce}.} \bibinfo{year}{2011}\natexlab{}.
\newblock \showarticletitle{Sparse polynomial division using a heap}.
\newblock \bibinfo{journal}{\emph{Journal of Symbolic Computation}}
  \bibinfo{volume}{46}, \bibinfo{number}{7} (\bibinfo{year}{2011}),
  \bibinfo{pages}{807--822}.
\newblock
\urldef\tempurl%
\url{https://doi.org/10.1016/j.jsc.2010.08.014}
\showDOI{\tempurl}


\bibitem[Murao and Fujise(1996)]%
        {MuraoFujise1996}
\bibfield{author}{\bibinfo{person}{Hirokazu Murao} {and}
  \bibinfo{person}{Tetsuro Fujise}.} \bibinfo{year}{1996}\natexlab{}.
\newblock \showarticletitle{Modular {{Algorithm}} for {{Sparse Multivariate
  Polynomial Interpolation}} and Its {{Parallel Implementation}}}.
\newblock \bibinfo{journal}{\emph{Journal of Symbolic Computation}}
  \bibinfo{volume}{21}, \bibinfo{number}{4} (\bibinfo{year}{1996}),
  \bibinfo{pages}{377--396}.
\newblock
\urldef\tempurl%
\url{https://doi.org/10.1006/jsco.1996.0020}
\showDOI{\tempurl}


\bibitem[Nakos(2020)]%
        {nakos2020}
\bibfield{author}{\bibinfo{person}{Vasileios Nakos}.}
  \bibinfo{year}{2020}\natexlab{}.
\newblock \showarticletitle{Nearly Optimal Sparse Polynomial Multiplication}.
\newblock \bibinfo{journal}{\emph{IEEE Transaction on Information Theory}}
  \bibinfo{volume}{66}, \bibinfo{number}{11} (\bibinfo{year}{2020}),
  \bibinfo{pages}{7231--7236}.
\newblock
\urldef\tempurl%
\url{https://doi.org/10.1109/TIT.2020.2989385}
\showDOI{\tempurl}


\bibitem[Nussbaumer(1980)]%
        {Nussbaumer1980}
\bibfield{author}{\bibinfo{person}{Henri Nussbaumer}.}
  \bibinfo{year}{1980}\natexlab{}.
\newblock \showarticletitle{Fast Polynomial Transform Algorithms for Digital
  Convolution}.
\newblock \bibinfo{journal}{\emph{IEEE Transactions on Acoustics, Speech, and
  Signal Processing}} \bibinfo{volume}{28}, \bibinfo{number}{2}
  (\bibinfo{year}{1980}), \bibinfo{pages}{205--215}.
\newblock
\urldef\tempurl%
\url{https://doi.org/10.1109/TASSP.1980.1163372}
\showDOI{\tempurl}


\bibitem[Perret~du Cray(2023)]%
        {armelle-phd}
\bibfield{author}{\bibinfo{person}{Armelle Perret~du Cray}.}
  \bibinfo{year}{2023}\natexlab{}.
\newblock \emph{\bibinfo{title}{Algorithmes pour les polynômes creux:
  interpolation, arithmétique, test d’identité}}.
\newblock \bibinfo{thesistype}{Ph.\,D. Dissertation}.
  \bibinfo{school}{L’Universit\'e de Montpellier}.
\newblock


\bibitem[Plonka et~al\mbox{.}(2018)]%
        {Plonka2018}
\bibfield{author}{\bibinfo{person}{Gerlind Plonka}, \bibinfo{person}{Katrin
  Wannenwetsch}, \bibinfo{person}{Annie Cuyt}, {and} \bibinfo{person}{Wen-shin
  Lee}.} \bibinfo{year}{2018}\natexlab{}.
\newblock \showarticletitle{Deterministic sparse FFT for M-sparse vectors}.
\newblock \bibinfo{journal}{\emph{Numerical Algorithms}} \bibinfo{volume}{78},
  \bibinfo{number}{1} (\bibinfo{date}{01 May} \bibinfo{year}{2018}),
  \bibinfo{pages}{133--159}.
\newblock
\urldef\tempurl%
\url{https://doi.org/10.1007/s11075-017-0370-5}
\showDOI{\tempurl}


\bibitem[Potts et~al\mbox{.}(2016)]%
        {sfft-music}
\bibfield{author}{\bibinfo{person}{Daniel Potts}, \bibinfo{person}{Manfred
  Tasche}, {and} \bibinfo{person}{Toni Volkmer}.}
  \bibinfo{year}{2016}\natexlab{}.
\newblock \showarticletitle{Efficient Spectral Estimation by MUSIC and ESPRIT
  with Application to Sparse FFT}.
\newblock \bibinfo{journal}{\emph{Frontiers in Applied Mathematics and
  Statistics}}  \bibinfo{volume}{2} (\bibinfo{year}{2016}).
\newblock
\urldef\tempurl%
\url{https://doi.org/10.3389/fams.2016.00001}
\showDOI{\tempurl}


\bibitem[Prony(1795)]%
        {Prony}
\bibfield{author}{\bibinfo{person}{Gaspard Clair François Marie Riche~de
  Prony}.} \bibinfo{year}{1795}\natexlab{}.
\newblock \showarticletitle{Essai exp\'erimental et analytique sur les lois de
  la Dilatabilit\'e de fluides \'elastique et sur celles de la Force expansive
  de la vapeur de l'eau et de la vapeur de l'alkool, \`a diff\'erentes
  temp\'eratures}.
\newblock \bibinfo{journal}{\emph{Journal \'Ecole Polytechnique}}
  \bibinfo{volume}{1}, \bibinfo{number}{Flor\'eal et Prairial III}
  (\bibinfo{year}{1795}), \bibinfo{pages}{24--76}.
\newblock
\urldef\tempurl%
\url{https://gallica.bnf.fr/ark:/12148/bpt6k433661n/f32.item}
\showURL{%
\tempurl}


\bibitem[Roche(2011)]%
        {Roche2011}
\bibfield{author}{\bibinfo{person}{Daniel~S. Roche}.}
  \bibinfo{year}{2011}\natexlab{}.
\newblock \showarticletitle{Chunky and equal-spaced polynomial multiplication}.
\newblock \bibinfo{journal}{\emph{Journal of Symbolic Computation}}
  \bibinfo{volume}{46}, \bibinfo{number}{7} (\bibinfo{year}{2011}),
  \bibinfo{pages}{791--806}.
\newblock
\urldef\tempurl%
\url{https://doi.org/10.1016/j.jsc.2010.08.013}
\showDOI{\tempurl}


\bibitem[Rosser and Schoenfeld(1962)]%
        {rosser1962}
\bibfield{author}{\bibinfo{person}{J.~Barkley Rosser} {and}
  \bibinfo{person}{Lowell Schoenfeld}.} \bibinfo{year}{1962}\natexlab{}.
\newblock \showarticletitle{Approximate formulas for some functions of prime
  numbers}.
\newblock \bibinfo{journal}{\emph{Illinois Journal Math.}} \bibinfo{volume}{6},
  \bibinfo{number}{1} (\bibinfo{year}{1962}), \bibinfo{pages}{64--94}.
\newblock
\urldef\tempurl%
\url{https://doi.org/10.1215/ijm/1255631807}
\showDOI{\tempurl}


\bibitem[Sch{\"o}nhage(1982)]%
        {Schonhage1982}
\bibfield{author}{\bibinfo{person}{Arnold Sch{\"o}nhage}.}
  \bibinfo{year}{1982}\natexlab{}.
\newblock \showarticletitle{Asymptotically Fast Algorithms for the Numerical
  Muitiplication and Division of Polynomials with Complex Coefficients}. In
  \bibinfo{booktitle}{\emph{European Computer Algebra Conference}}
  \emph{(\bibinfo{series}{{{EUROCAM'82}}})}. \bibinfo{publisher}{{Springer}},
  \bibinfo{pages}{3--15}.
\newblock
\urldef\tempurl%
\url{https://doi.org/10.1007/3-540-11607-9_1}
\showDOI{\tempurl}


\bibitem[Sch{\"o}nhage and Strassen(1971)]%
        {SchonhageStrassen1971}
\bibfield{author}{\bibinfo{person}{Arnold Sch{\"o}nhage} {and}
  \bibinfo{person}{Volker Strassen}.} \bibinfo{year}{1971}\natexlab{}.
\newblock \showarticletitle{{Schnelle Multiplikation gro\ss er Zahlen}}.
\newblock \bibinfo{journal}{\emph{Computing}} \bibinfo{volume}{7},
  \bibinfo{number}{3} (\bibinfo{year}{1971}), \bibinfo{pages}{281--292}.
\newblock
\urldef\tempurl%
\url{https://doi.org/10.1007/BF02242355}
\showDOI{\tempurl}


\bibitem[Toom(1963)]%
        {Toom1963}
\bibfield{author}{\bibinfo{person}{Andrei~L. Toom}.}
  \bibinfo{year}{1963}\natexlab{}.
\newblock \showarticletitle{The Complexity of a Scheme of Functional Elements
  Realizing the Multiplication of Integers}.
\newblock \bibinfo{journal}{\emph{Soviet Mathematics Doklady}}
  \bibinfo{volume}{3} (\bibinfo{year}{1963}), \bibinfo{pages}{714--716}.
\newblock


\bibitem[Zhou and Labahn(2012)]%
        {ZhoLab12}
\bibfield{author}{\bibinfo{person}{Wei Zhou} {and} \bibinfo{person}{George
  Labahn}.} \bibinfo{year}{2012}\natexlab{}.
\newblock \showarticletitle{Efficient Algorithms for Order Basis Computation}.
\newblock \bibinfo{journal}{\emph{Journal of Symbolic Computation}}
  \bibinfo{volume}{47}, \bibinfo{number}{7} (\bibinfo{year}{2012}),
  \bibinfo{pages}{793--819}.
\newblock
\showISSN{0747-7171}


\bibitem[Zippel(1979)]%
        {Zippel79}
\bibfield{author}{\bibinfo{person}{Richard Zippel}.}
  \bibinfo{year}{1979}\natexlab{}.
\newblock \showarticletitle{Probabilistic algorithms for sparse polynomials}.
  In \bibinfo{booktitle}{\emph{Symbolic and {Algebraic} {Computation}}}
  \emph{(\bibinfo{series}{EUROSAM'79}, Vol.~\bibinfo{volume}{72})},
  \bibfield{editor}{\bibinfo{person}{G.~Goos}, \bibinfo{person}{J.~Hartmanis},
  \bibinfo{person}{P.~Brinch~Hansen}, \bibinfo{person}{D.~Gries},
  \bibinfo{person}{C.~Moler}, \bibinfo{person}{G.~Seegmüller},
  \bibinfo{person}{J.~Stoer}, \bibinfo{person}{N.~Wirth}, {and}
  \bibinfo{person}{Edward~W. Ng}} (Eds.). \bibinfo{publisher}{Springer},
  \bibinfo{pages}{216--226}.
\newblock
\urldef\tempurl%
\url{https://doi.org/10.1007/3-540-09519-5_73}
\showDOI{\tempurl}


\bibitem[Zippel(1990)]%
        {Zippel1990}
\bibfield{author}{\bibinfo{person}{Richard Zippel}.}
  \bibinfo{year}{1990}\natexlab{}.
\newblock \showarticletitle{Interpolating Polynomials from Their Values}.
\newblock \bibinfo{journal}{\emph{Journal of Symbolic Computation}}
  \bibinfo{volume}{9}, \bibinfo{number}{3} (\bibinfo{date}{March}
  \bibinfo{year}{1990}), \bibinfo{pages}{375--403}.
\newblock
\showISSN{0747-7171}
\urldef\tempurl%
\url{https://doi.org/10.1016/S0747-7171(08)80018-1}
\showDOI{\tempurl}


\end{thebibliography}
\end{document}